\def\bN{\mathbb{N}}
\def\cA{\mathcal{A}}
\def\cB{\mathcal{B}}
\def\cP{\mathcal{D}}
\def\cE{\mathcal{E}}
\def\cP{\mathcal{F}}
\def\cM{\mathcal{M}}
\def\cP{\mathcal{P}}
\def\cQ{\mathcal{Q}}
\def\ie{\emph{i.e.}\xspace}
\newcommand{\Sub}{\mathsf{Sub}}
\newcommand{\reduce}{\mathsf{reduce}}
\newcommand{\cw}{\mathsf{cw}}
\newtheorem{theorem}{Theorem}[section]
\newtheorem{lemma}[theorem]{Lemma}
\newtheorem{claim}{Claim}[theorem]
\newtheorem{fact}[theorem]{Fact}
\newtheorem{definition}[theorem]{Definition}
\renewcommand{\operatorname}[1]{\mathsf{#1}}
\newcommand{\mult}{\operatorname{mult}}
\newcommand{\bdeg}{\operatorname{bdeg}}
\newcommand{\rdeg}{\operatorname{rdeg}}
\renewcommand{\deg}{\operatorname{deg}}
\newcommand{\aux}{\operatorname{aux}}
\newcommand{\lab}{\operatorname{lab}}
\newcommand{\val}{\operatorname{val}}
\newcommand{\NP}{\textsf{NP}}
\newcommand{\FPT}{\textsf{FPT}\xspace}
\newcommand{\XP}{\textsf{XP}\xspace}
\newcommand{\W}{\textsf{W}}
\newcommand{\ETH}{\textsf{ETH}\xspace}
\begin{document}
	
\title[Hamiltonian Cycle parameterized by Clique-width]{An optimal  XP algorithm for Hamiltonian Cycle on graphs of bounded clique-width}

\author[B.~Bergougnoux]{Benjamin Bergougnoux}
\address{Universit\'e Clermont Auvergne, LIMOS, CNRS, Aubi\`ere, France}
\email{\href{mailto:benjamin.bergougnoux@uca.fr}{\texttt{benjamin.bergougnoux@uca.fr}}}

\author[M.~M.~Kant\'e]{Mamadou Moustapha Kant\'e}
\email{	\href{mailto:mamadou.kante@uca.fr}{\texttt{mamadou.kante@uca.fr}}}

\author{O-joung Kwon}
\address{Department of Mathematics, Incheon National University, Incheon, South Korea}
\email{	\href{mailto:ojoungkwon@gmail.com}{\texttt{ojoungkwon@gmail.com}}}

\thanks{ An extended abstract appeared in 
  Algorithms and Data Structures, WADS 2017~\cite{BergougnouxKK17}.
  B. Bergougnoux and M.M. Kant\'e are supported by French Agency for Research under the GraphEN project (ANR-15-CE40-0009). O. Kwon is supported by the European Research Council	(ERC) under the European Union's Horizon 2020 research and innovation programme (ERC consolidator grant DISTRUCT, agreement No. 648527), and also supported by the National Research Foundation of Korea (NRF) grant funded by the Ministry of Education (No. NRF-2018R1D1A1B07050294).}

\begin{abstract}

In this paper, we prove that, given a clique-width $k$-expression of an $n$-vertex graph,
\textsc{Hamiltonian Cycle} can be solved in time $n^{\mathcal{O}(k)}$. This improves the naive algorithm that runs in time $n^{\mathcal{O}(k^2)}$ by Espelage et al. (WG 2001), 
and it also matches with the lower bound result by Fomin et al. that, unless the Exponential Time Hypothesis fails, there is no algorithm running in time $n^{o(k)}$ (SIAM. J. Computing 2014).

We present a technique of representative sets using two-edge colored multigraphs on $k$ vertices.  
The essential idea is that, for a two-edge colored multigraph, the existence of an Eulerian trail that uses edges with different colors alternately can be
determined by two information:
 the number of colored edges incident with each vertex, and the connectedness of the multigraph.
With this idea, we avoid the bottleneck of the naive algorithm, which stores all the possible multigraphs on $k$ vertices with at most $n$ edges. 
\end{abstract}

\date{\today}
	
\maketitle

	\section{Introduction}\label{sec:intro}
	
	\emph{Tree-width} is one of the most well-studied graph parameters in the graph algorithm community, due to its numerous structural and algorithmic properties (see the survey \cite{Bodlaender98}). 
	A celebrated algorithmic meta-theorem by Courcelle~\cite{Courcelle90} states that every graph problem expressible in monadic second-order logic (MSO$_2$) can be decided in linear time on graphs of bounded tree-width. Among the problems expressible in MSO$_2$, there are some well-known \NP-hard problems such as  \textsc{Minimum Dominating Set}, \textsc{Hamiltonian Cycle}, and \textsc{Graph Coloring}.
	
	Despite the broad interest on tree-width, only sparse graphs can have bounded tree-width. 
	But, on many dense graph classes, some \NP-hard problems admit polynomial-time algorithms, and many of these algorithms can be explained by the boundedness of their \emph{clique-width}.
	Clique-width is a graph parameter that originally emerges from the theory of graph grammars~\cite{CourcelleER93} and the terminology was first introduced by Courcelle and Olariu \cite{CourcelleO00} (see also the book \cite{CourcelleE12}).

	Clique-width is defined in terms of the following graph operations: (1) addition of a single labeled vertex,~ (2) addition of all possible edges between vertices labeled $i$ and those labeled $j$,~ (3) renaming of labels,~ and (4) taking the disjoint union of two labeled graphs. The clique-width of a graph is the minimum number of labels needed to construct it. An expression constructing a graph with at most $k$ labels is called a \emph{(clique-width) $k$-expression}. 
	The modeling power of clique-width is strictly stronger than the modeling power of tree-width. 
	In other words, if a graph class has bounded tree-width, then it has bounded clique-width, but the converse is not true.

	Computing the clique-width of a graph is a problem which has received significant attention over the last decade. 
	Fellows et al.~\cite{FellowsRRS06} showed that computing clique-width is \NP-hard.
	For a fixed $k$, the best known approximation algorithm is due to Hlin\v{e}n\'y and Oum \cite{HlinenyO08}; 
	it computes in time $O(f(k)\cdot n^3)$ a $(2^{k+1}-1)$-expression for an $n$-vertex graph of clique-width at most $k$.

	Courcelle, Makowsky, and Rotics \cite{CourcelleMR00} extended the meta-theorem of
        Courcelle~\cite{Courcelle90} to graphs of bounded clique-width at a cost of a smaller set of problems.  More precisely, they showed that every problem expressible in monadic second
        order logic with formula that does not use edge set quantifications (called MSO$_1$) can be decided in time $O(f(k)\cdot n)$ in any $n$-vertex graph of
        clique-width $k$, provided that a clique-width $k$-expression of it is given.
	
	For some MSO$_1$ problems, clique-width and tree-width have sensibly the same behavior.  Indeed, many problems expressible in MSO$_1$ that admit $2^{O(k)}\cdot n^{O(1)}$-time algorithms parameterized by tree-width have been shown to admit $2^{O(k)}\cdot n^{O(1)}$-time algorithms, when a clique-width $k$-expression is given.  These
        include famous problems like \textsc{Minimum Dominating Set} and \textsc{Minimum Vertex Cover} \cite{BuiXuanTV13,Gurski08,RooijBR09}, or even their connected
        variants and \textsc{Feedback Vertex Set} \cite{BergougnouxK19,BodlaenderCKN15,CyganNPPvRW11}.

	On the other hand, several classical problems, such as \textsc{Max-Cut}, \textsc{Edge Dominating Set (EDS)}, \textsc{Graph Coloring (GC)}, and \textsc{Hamiltonian Cycle (HC)},  are not expressible in MSO$_1$.
These problems are known to be \FPT parameterized by tree-width thanks to Courcelle's theorem or some variants of this latter theorem \cite{ArnborgLS91,BoriePT92,CourcelleM93}. 
They are also known to admit \XP algorithms parameterized by clique-width \cite{EspelageGW01, FominGLS14, KoblerR03}.

	A natural question is whether these problems admit algorithms with running time $f(k)\cdot n^{O(1)}$ given a $k$-expression of the input graph.
	Fomin, Golovach, Lokshtanov, and Saurabh \cite{FominGLS10a} proved the $\W[1]$-hardness of \textsc{EDS}, \textsc{GC}, and \textsc{HC} with clique-width as parameter, 
	which implies that these problems do not admit algorithms with running time $f(k)\cdot n^{O(1)}$, for any function $f$, unless $\W[1]= \FPT$.
	In 2014, the same authors \cite{FominGLS14} have proved that \textsc{Max-Cut} and \textsc{EDS} admit $n^{O(k)}$-time algorithms, and that they do not admit $f(k)\cdot n^{o(k)}$-time algorithms unless \ETH fails.
	In the conclusion of \cite{FominGLS14}, the authors state that \textsc{HC} does not admit $f(k)\cdot n^{o(k)}$-time algorithm under \ETH (they gave the proof in \cite{FominGLSZ19}) and they left an open question of finding an algorithm with running time $f(k)\cdot n^{O(k)}$.
	At the time, the best known running time parameterized by clique-width for \textsc{HC} and \textsc{GC} were respectively $n^{O(k^2)}$ \cite{EspelageGW01}  and $n^{O(2^k)}$ \cite{KoblerR03}.
	Recently, Fomin et al. \cite{FominGLSZ19} provided  a lower bound of $f(k)\cdot n^{2^{o(k)}}$  for \textsc{GC}.  
\bigskip
\paragraph{\bf Our Contribution and Approach} In this paper, we prove that there exists an algorithm solving \textsc{Hamiltonian Cycle} in time $n^{O(k)}$, when a
clique-width $k$-expression is given.  A \emph{Hamiltonian cycle} of a graph $G$ is a cycle containing all the vertices of $G$. The problem \textsc{Hamiltonian Cycle}
asks, given a graph $G$, if $G$ contains a Hamiltonian cycle.  Specifically, we prove the following.

	\begin{theorem}
		There exists an algorithm that, given an $n$-vertex graph $G$ and a $k$-expression of $G$, solves \textsc{Hamiltonian Cycle} in time $ O(n^{2k+5} \cdot 2^{2k (\log_2(k)+ 1)} \cdot k^3  \log_2(n k))$.
	\end{theorem}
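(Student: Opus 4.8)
\emph{The dynamic-programming skeleton.} The plan is to evaluate the given $k$-expression bottom-up, maintaining at each node $t$ a compact summary of all \emph{partial solutions} compatible with some Hamiltonian cycle of the whole graph. If $G_t$ denotes the labeled graph produced at $t$, a partial solution is a family of vertex-disjoint paths covering $V(G_t)$, each destined to become a maximal sub-path of a Hamiltonian cycle once the edges created higher in the expression are used to link them. Since future operations only ever see the labels of the path-endpoints, I would encode a partial solution by contracting every path to a single edge between its two endpoint-labels, obtaining a multigraph $H$ on the vertex set $\{1,\dots,k\}$ (single vertices and prematurely-closed cycles being treated as degenerate cases). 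Storing every realizable $H$ is exactly what forces the naive $n^{O(k^2)}$ bound, because $H$ may carry up to $n$ parallel edges between each pair of labels.

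\emph{A characterization of completability, and the resulting equivalence.} To decide, at the root, whether a family of paths can be closed into a single Hamiltonian cycle, I would color the contracted paths \emph{red} and the linking edges \emph{blue}, and observe that the closure is a Hamiltonian cycle precisely when red and blue edges together form one connected closed trail that alternates colors and uses every edge. The first genuine ingredient I would prove is the lemma announced in the abstract: a connected two-edge-colored multigraph admits such an alternating closed Eulerian trail if and only if, at each vertex, its red-degree equals its blue-degree. Granting this, whether a red multigraph $H$ can be completed by an \emph{admissible} choice of blue links depends only on the red-degree sequence $(\deg_H(1),\dots,\deg_H(k))$ and on the partition of the labels into the connected components of $H$. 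I would therefore call two realizable multigraphs \emph{equivalent} when they agree on both, keep one representative per class, and bound the number of classes by $(n+1)^{k}$ (degree sequences) times the number of label-partitions, at most $2^{k(\log_2 k+1)}$.

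\emph{Maintaining representatives and accounting for the running time.} I would then show how to update the representative table across the four operations. Introduce and relabel act transparently on the encoding, and a disjoint union combines a representative from the left with one from the right by summing degree sequences and taking the join of the two label-partitions; this \emph{pairing} of two tables is where both the squared exponent $n^{2k}$ and the factor $2^{2k(\log_2 k+1)}=(2k)^{2k}$ arise, since each side has up to $n^{k}2^{k(\log_2 k+1)}$ entries. The delicate operation is an edge-operation $\eta_{i,j}$: here one must enumerate the ways of using some of the freshly available $i$--$j$ edges to splice an endpoint labeled $i$ to an endpoint labeled $j$, which consumes those endpoints, merges the corresponding paths, and creates a new path between their other two endpoints. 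After each such update I would re-reduce to representatives and argue that the representative property is preserved. Multiplying the dominant per-union cost by the number of nodes of the expression and by the $O(k^3\log_2(nk))$ overhead for summing degree sequences, joining partitions and testing admissibility on integers of size $O(nk)$ — together with the polynomial-in-$n$ cost of enumerating splicings at edge-operations, which is absorbed into the $n^{4}$ factor — yields the stated bound $O(n^{2k+4}\cdot 2^{2k(\log_2 k+1)}\cdot k^3\log_2(nk))$.

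\emph{The main obstacle.} The crux is the correctness of maintaining only the coarse invariant across $\eta_{i,j}$. Splicing an $i$-endpoint to a $j$-endpoint replaces two red edges and a blue link by a single red edge between the two \emph{other} endpoints of the spliced paths, yet the labels of those other endpoints — and hence the new degree sequence and the way components merge — are not determined by the degree sequence and the label-partition alone. I must therefore prove that discarding all but one multigraph per equivalence class never destroys a completable solution, even though later splices depend on information that the invariant suppresses. This calls for an exchange argument that reroutes any alternating Eulerian trail available from a discarded multigraph through the kept representative, using exactly the red-degree/blue-degree balance and the connectivity furnished by the key lemma; making this rerouting robust under the whole sequence of edge-operations, rather than only at the root, is the heart of the proof.
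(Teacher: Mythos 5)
Your skeleton, your encoding by a multigraph on the $k$ label-vertices, your key characterization (a connected two-colored multigraph has a red--blue alternating Eulerian trail iff red-degree equals blue-degree at every vertex), your equivalence relation, your class count, and your running-time accounting all coincide with the paper's. The genuine gap is exactly the one you flag yourself in your last paragraph: you never prove that keeping one representative per (degree sequence, component partition) class is safe across the subsequent $\eta_{i,j}$ operations, and without that the correctness of the whole algorithm is unestablished. The paper closes this gap not by the global exchange argument you sketch ("rerouting an Eulerian trail through the kept representative, robustly under the whole sequence of edge-operations") but by strengthening the invariant so that the difficulty never arises. It defines (Definition~\ref{def:bigequiv}) $\cA\lesssim_H\cB$ to mean: for \emph{every} blue-edge multigraph $\cM$ on the label-vertices --- not merely those realizable as complement solutions --- whenever some $\cP_1\in\cB$ has $\aux_H(\cP_1)\uplus\cM$ admitting a red--blue Eulerian trail, so does some $\cP_2\in\cA$. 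Because, by Lemma~\ref{lem:euleriantrail}, admitting such a trail depends \emph{only} on connectivity and on the per-vertex red/blue degree balance, safety of reduction (Lemma~\ref{lem:reduce1}) is immediate: a representative with the same components and degrees works against the same $\cM$. The point you worry about --- that the labels of the ``other'' endpoints consumed by a splice are not determined by the invariant --- is then handled in Lemma~\ref{lem:addeges} by adding the new $i$--$j$ edges one at a time and, in the inductive step (Claim~\ref{claim:addeges}), temporarily recoloring the freshly added red edge as a blue edge of an enlarged $\cM'$; applying $\cA\lesssim_H\cB$ against $\cM'$ yields a representative whose trail traverses that blue edge, which is then converted back into a genuine splice of the representative. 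Transitivity of $\lesssim_H$ and re-reducing after each single edge addition do the rest; no rerouting across the whole future of the expression is needed.

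A secondary omission: to convert a red--blue Eulerian trail of the auxiliary multigraph back into an actual Hamiltonian cycle of $G$ (Lemma~\ref{lem:EulerianHamiltonian}), the paper needs the $k$-expression to be \emph{irredundant}, so that vertices with the same label in $H$ have identical neighborhoods outside $H$ and the endpoints of complement paths can be swapped within a label class (Lemma~\ref{lem:cwproperties}). Your phrase ``future operations only ever see the labels of the path-endpoints'' gestures at this, but you should make the irredundancy assumption (obtainable in linear time) explicit, since without it the second part of Lemma~\ref{lem:cwproperties} fails.
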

	
	Our algorithm is a dynamic programming one whose steps are based on the $k$-labeled graphs $H$ arising in the $k$-expression of $G$.
	Observe that the edges of a Hamiltonian cycle of $G$ which belong to $E(H)$ induce either a Hamiltonian cycle or a collection of vertex-disjoint paths in $G$ covering $H$.
	Consequently, we define a partial solution as a set of edges of $H$ which induces a collection of paths (potentially empty) covering $H$.
	As in \cite{EspelageGW01},  with each partial solution $\cP$, 
	we associate an auxiliary multigraph such that its vertices correspond to the labels of $H$ and each edge $\{i,j \}$ corresponds to a maximal path induced by $\cP$ with end-vertices labeled $i$ and $j$.
	
	Since $H$ is a $k$-labeled graph arising in a $k$-expression of $G$, we have that two vertices $x$ and $y$ with the same label in $H$ have the same neighborhood in $G-E(H)$ (the graph $G$ without the edges of $H$).
	It follows that the endpoints of a path in a partial solution are not important and what matters are the labels of these endpoints.
	As a result, two partial solutions with the same auxiliary multigraph are equivalent, \ie, if one is contained in a Hamiltonian cycle, then the other also.
	From these observations, one easily deduces the $n^{O(k^2)}$-time algorithm, due to Espelage, Gurski, and Wanke~\cite{EspelageGW01}, 
	because there are at most $n$ possible paths between two label classes and thus there are at most $n^{\mathcal{O}(k^2)}$ possible auxiliary graphs.
	
	To obtain our $n^{O(k)}$-time algorithm, we refine the above equivalence relation.
	We define that two partial solutions are equivalent if their auxiliary graphs have the connected components on same vertex sets, and the paths they induce have the same number of endpoints labeled $i$, for each label $i$.
	The motivation of this equivalence relation can be described as follows. 
	Suppose that we have a partial solution $\cP$ and a set of edges $\cQ\subseteq E(G)\setminus E(H)$ so that $\cP\cup \cQ$ forms a Hamiltonian cycle, and
	we consider to make an auxiliary graph of $\cQ$, and identify with the one for $\cP$.
	To distinguish edges obtained from $\cP$ or $\cQ$, we color edges by red if one came from $\cP$ and by blue otherwise.
	Then following the Hamiltonian cycle, we can find an Eulerian trail of this merged auxiliary graph that uses edges of distinct colors alternately.
	But then if $\cP'$ is equivalent to $\cP$, 
	then one can observe that if we replace the red part with the auxiliary graph of $\cP'$, then 
	it also has such an Eulerian trail, and we can show that $\cP'$ can also be completed to a Hamiltonian cycle.
	So, in the algorithm, for each equivalence class, we store one partial solution.
	We define this equivalence relation formally in Section \ref{sec:path-col}.

Since, the number of
        partitions of a $k$-size set is at most $k^k$ and the number of paths induced by a partial solution is always bounded by $n$, the number of non-equivalent partial solutions is then
        bounded by $(2n)^k\cdot k^k$ (the maximum degree of an auxiliary multigraph is at most $2n$ because a loop is counted as two edges).  
        	The running time of our algorithm follows from the maximum number of non-equivalent partial solutions.
        The main effort in the algorithm consists then in updating the equivalence classes of this equivalence relation in terms of operations
        based on the clique-width operations. 

	\medskip
	\paragraph{\bf Overview}
	In Section \ref{sec:prelim}, we give basic definitions and notations concerning (multi)graphs and clique-width. 
	Our notions of partial solutions and of auxiliary
        multigraphs are given in Section \ref{sec:path-col}.  In Section \ref{sec:toolkit}, we prove the equivalence between the existence of Hamiltonian cycles in the
        input graph and the existence of red-blue Eulerian trails in auxiliary multigraphs, and deduce that it is enough to store $(2n)^k\cdot k^{k}$ partial
        solutions at each step of our dynamic programming algorithm.  In Section \ref{sec:algo}, we show how to obtain from the results of Section \ref{sec:toolkit} an
        $n^{O(k)}$-time algorithm for \textsc{Hamiltonian Cycle}.
	In Section \ref{sec:dhc}, we give some intuitions for solving in time $n^{O(k)}$ the problems \textsc{Directed Hamiltonian Cycle} given a $k$-expression.
	We end up with some concluding remarks and open questions in Section \ref{sec:conclusion}.
	
	\section{Preliminaries}\label{sec:prelim}
	
	The size of a set $V$ is denoted by $|V|$, and we write $[V]^2$ to denote the set of all subsets of $V$ of size $2$. 
	We denote by $\bN$ the set of non-negative integers.
	For two sets $A$ and $B$, we let
	\begin{align*}
	A \otimes B  & := \begin{cases}  \emptyset & \textrm{if $A=\emptyset$ or $B=\emptyset$,}\\ 
	\{ X\cup Y \mid X\in A \,\wedge\, Y\in B\}  &
	\textrm{otherwise}. \end{cases} 
	\end{align*}
	For a positive integer $n$, let $[n]:=\{1, 2, \ldots, n\}$.
	
	\medskip
	\paragraph{\bf Graph.} We essentially follow Diestel's book~\cite{Diestel05} for our graph terminology, but we deal only with finite graphs. 
	We distinguish graphs and multigraphs, and for graphs we do not allow to have multiple edges or loops, 
	while we allow them in multigraphs.
	The vertex set of a graph $G$ is denoted by $V(G)$ and its edge set is denoted by $E(G)\subseteq [V(G)]^2$.
	We write $xy$ to denote an edge $\{x,y\}$. 

	Let $G$ be a graph.  For $X\subseteq V(G)$, we denote by $G[X]$ the subgraph of $G$ induced by $X$. 
	For $F\subseteq E(G)$, we write $G_{|F}$ for the subgraph $(V(G),F)$ and $G-F$ for the subgraph $(V(G),E(G)\setminus F)$.
	For an edge $e$ of $G$, we simply write $G-e$ for $G-\{e\}$.
	The \emph{degree} of a vertex $x$, denoted by $\deg_G(x)$, is the number of edges incident with $x$. The set of vertices adjacent to a vertex $v$ is denoted by $N_G(x)$.

		\medskip
	\paragraph{\bf Multigraph.} 
	A \emph{multigraph} is essentially a graph, but we allow multiple edges, \ie, edges incident with the same set of vertices. 
	Formally, a \emph{multigraph} $G$ is a pair $(V(G),E(G))$ of disjoint sets, also called sets of
	vertices and edges, respectively, together with a map $\mult_G:E(G)\to V(G)\cup [V(G)]^2$, which maps every edge to one or two vertices, still called its endpoints.
	The degree of a vertex $x$ in a multigraph $G$, is defined analogously as in graphs, except that each loop is counted twice, and similarly for other notions.  
	If there are exactly $k$ edges $e$ such	that $\mult_G(e)=\{x,y\}$ (or $\mult_G(e)=\{x\}$), then we denote these distinct edges by $\{x,y\}_1,\ldots,\{x,y\}_k$ (or $\{x\}_1,\ldots,\{x\}_k$).

	For two multigraphs $G$ and $H$ on the same vertex set $\{v_1,\ldots,v_k\}$ and with disjoint edge sets, we denote by $G\uplus H$ the multigraph with vertex set $\{v_1,\ldots,v_k\}$, edge set $E(G)\cup E(H)$, and
	\[ \mult_{G\uplus H}(e):=\begin{cases}
		\mult_G(e) & \text{if } e\in E(G),\\
		\mult_H(e) & \text{otherwise}.
	\end{cases} \]
	 If the edges of $G$ and $H$ are colored, then this operation preserves the colors of the edges.

	\medskip
	\paragraph{\bf Walk.} A \emph{walk} of a graph is a sequence of vertices and edges, starting and ending at some vertices, called \emph{end-vertices}, where for every consecutive pair of a vertex $x$ and an edge $e$, $x$ is incident with $e$. 
	The vertices of a walk which are not end-vertices are called \emph{internal} vertices.
	A \emph{trail} of a graph is a walk where each edge is used at most once.
	A walk (or a trail) is \emph{closed} if its two end-vertices are the same. 
	Moreover, when the edges of a graph are colored red or blue, we say that a walk $W=(v_1,e_1,\dots, v_{r-1}, e_{r-1},v_r)$ is a \emph{red-blue walk}, if, for every $i\in\{1,\dots,r-2\}$, the colors of $e_i$ and $e_{i+1}$ are different and the colors of $e_1$ and $e_{r-1}$ are different, when the walk is closed. 
	We adapt all the notations to multigraphs as well.

	For two walks $W_1=(v_1,e_1,\dots,e_{\ell-1},v_\ell)$ and $W_2=(v'_1,e'_1,\dots,e'_{r-1},v'_r)$ such that $v_\ell=v'_1$, the \emph{concatenation} of $W_1$ and $W_2$, denoted by $W_1-W_2$, is the walk $(v_1,e_1,\dots,e_{\ell-1},v_\ell,e'_1,\dots,e'_{r-1},v'_r)$.

	A \emph{path} of a graph is a walk where each vertex is used at most once.
	A \emph{cycle} of a graph is a closed walk where each vertex other than the end-vertices is used at most once.
		An \emph{(closed) Eulerian trail} in a graph $G$ is a closed trail containing all the edges of $G$.
		In particular, if the edges of a graph are colored red or blue, 
		then a red-blue Eulerian trail is an Eulerian trail that is a red-blue walk.

	\medskip
	\paragraph{\bf Clique-width.}
	 A \emph{$k$-labeled graph} is a pair $(G,\lab_G)$ of a graph $G$  and a function $\lab_G$ from $V(G)$ to $[k]$, called the \emph{labeling
	function}.
	We denote by $\lab_G^{-1}(i)$ the set of vertices in $G$ with label $i$.
	The notion of clique-width is defined by  Courcelle and Olariu \cite{CourcelleO00}
	 and is based on the following operations:
	\begin{itemize}
		\item creating a graph, denoted by $i(x)$, with a single vertex $x$ labeled with $i\in [k]$;

		\item for a labeled graph $G$ and distinct labels $i,j \in [k]$, relabeling the vertices of $G$ with label $i$ to $j$, denoted by $\rho_{i\to j}(G)$; 
		\item for a labeled graph $G$ and distinct labels $i,j\in [k]$, adding all the non-existent edges between vertices with label $i$ and vertices with label $j$, denoted by $\eta_{i,j}(G)$;
		\item taking the disjoint union of two labeled graphs $G$ and $H$, denoted by $G\oplus H$,  with 
		\[ \lab_{G\oplus H}(v):=
		\begin{cases}
		\lab_G(v) & \text{if } x\in V(G),\\
		\lab_H(v) & \text{otherwise.}
		\end{cases} \]

	\end{itemize}
	
	A \emph{clique-width $k$-expression}, or shortly a \emph{$k$-expression}, is a finite well-formed term built with the four operations above and using at most $k$ labels. 
	Each $k$-expression $\phi$ evaluates into a $k$-labeled graph $(\val(\phi),\lab_{\val(\phi)})$. 
	The \emph{clique-width} of a graph $G$, denoted by $\cw{(G)}$, is the minimum $k$ such that $G$ is isomorphic to $\val(\phi)$ for some $k$-expression $\phi$. 
	We can assume without loss of generality that any $k$-expression defining a graph $G$ uses $O(n)$ disjoint union operations and $O(nk^2)$ unary operations \cite{CourcelleO00}.
	
	It is worth noticing, from the recursive definition of $k$-expressions, that one can compute in time linear in $|\phi|$ the labeling function $\lab_{\val(\phi)}$ of $\val(\phi)$, and hence we
        will always assume that it is given. 
	
	For example, the cycle $abcdea$ of length $5$ can be constructed using the $3$-expression represented as a tree-structure in Figure~\ref{fig:c5expression}.

	\begin{figure}[h]
		\center
		\begin{tikzpicture}[level distance=8mm]
		\tikzstyle{level 1}=[sibling distance=24mm] 
		\tikzstyle{level 6}=[sibling distance=12mm] 
		\node {$\eta_{1,3}$}
		child {node {$\oplus$} 
			child {node {$\rho_{3\rightarrow 2}$}
				child {node {$\eta_{2,3}$}
					child {node {$\oplus$}
						child {node {$ \eta_{1,2}$}
							child {node {$\oplus$}
								child {node {$1(a)$}}
								child {node {$2(b)$}}}}
						child {node {$ \eta_{1,3}$}
							child {node {$\oplus$}
								child {node {$3(c)$}}
								child {node {$1(d)$}}}	
					}}
			}}
			child {node {$3(e)$}}};
		\end{tikzpicture}
		\caption{An irredundant $3$-expression of $C_5$.} 	\label{fig:c5expression}
	\end{figure}
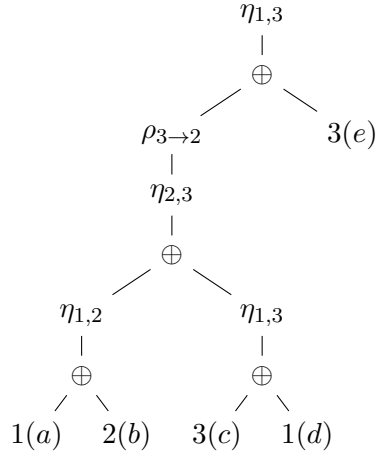
	
	The set of subexpressions of a $k$-expression $\phi$, denoted by $\Sub(\phi)$, is defined by the following induction: 
	\[ \Sub(\phi):=\begin{cases}
	\{\phi\} \text{ if } \phi:=i(x) \text{ with } i\in [k],\\
	\{\phi\}\cup \Sub(\phi')\cup \Sub(\phi^\star)\text{ if }\phi=\phi'\oplus\ \phi^\star,\\
	\{\phi\}\cup \Sub(\phi')\text{ if }\phi=f(\phi')\text{ with }f\in\{\rho_{i\rightarrow j},\eta_{i,j}\mid i,j\in[k]\}.
	\end{cases} \]
	We say that a $k$-labeled graph $(H,\lab_H)$ arises in a $k$-expression $\phi$ if $H=\val(\phi')$ and $\lab_H=\lab_{\val(\phi')}$, for some $\phi'\in\Sub(\phi)$.
	
	A $k$-expression $\phi$ is called \emph{irredundant} if, for every subexpression $\eta_{i,j}(\phi')$ of $\phi$, there are no edges in $\val(\phi')$ between the vertices labeled $i$ and the vertices labeled $j$.
	Courcelle and Olariu~\cite{CourcelleO00} proved that given a clique-width $k$-expression, it can be transformed into an irredundant $k$-expression in linear time. 
	The following useful properties of an irredundant $k$-expression will be used in Section~\ref{sec:toolkit}.

	\begin{lemma}\label{lem:cwproperties}
		Let $H$ be a $k$-labeled graph arising in an irredundant $k$-expression $\phi$ of a graph $G$.
		For all $u,v\in V(H)$ with $\lab_H(u)=i$ and $\lab_H(v)=j$, we have the following.
		\begin{enumerate}
			\item If $i=j$, then $ N_G(u)\setminus V(H) = N_G(v)\setminus V(H)$.
			\item If $uv\in E(G)\setminus E(H)$, then $i\neq j$ and, for all $x,y\in V(H)$ with $\lab_H(x)=i$ and $\lab_H(y)=j$, we have $xy\in E(G)\setminus E(H)$.
		\end{enumerate}		
	\end{lemma}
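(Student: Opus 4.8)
The plan is to follow a fixed pair $u,v\in V(H)$ as one moves up from the subexpression $\phi'$ with $\val(\phi')=H$ toward the root of $\phi$, tracking how their labels and incident edges change. The key observation is that the relabeling operation $\rho_{a\to b}$ acts on labels rather than on individual vertices, so two vertices carrying the same label are relabeled in lockstep. Concretely, I would first establish by induction on the operations applied above $\phi'$ the following claim: \emph{for every subexpression $\psi$ of $\phi$ having $\phi'$ as a subexpression, and for all $u,v\in V(H)$ with $\lab_H(u)=\lab_H(v)$, one has $\lab_{\val(\psi)}(u)=\lab_{\val(\psi)}(v)$.} The base case $\psi=\phi'$ is immediate. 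In the inductive step, a relabeling $\rho_{a\to b}$ sends equal labels to equal labels, an edge-addition $\eta_{a,b}$ leaves all labels untouched, and a disjoint union $\oplus$ preserves the labels of the side containing $\phi'$; in each case the equality is inherited from the induction hypothesis. Alongside this I would record the elementary monotonicity fact that none of the four operations ever deletes a vertex or an edge, so that $V(H)\subseteq V(\val(\psi))$ and $E(H)\subseteq E(\val(\psi))$ for every such $\psi$.

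For statement~(1), take $w\in N_G(u)\setminus V(H)$. Since $w\notin V(H)$, the edge $uw$ lies in $E(G)\setminus E(H)$ and hence is created by some subexpression $\eta_{a,b}(\psi)$ lying above $\phi'$ (only $\eta$-operations add edges, and $uw\notin E(\val(\phi'))$). Choosing $\psi$ so that $uw$ is first created there, in $\val(\psi)$ the vertices $u,w$ carry the two labels $a,b$; say $u$ has label $a$ and $w$ has label $b$. By the claim, $v$ also has label $a$ in $\val(\psi)$, so the same operation $\eta_{a,b}$ makes $vw$ an edge, giving $w\in N_G(v)\setminus V(H)$. Exchanging the roles of $u$ and $v$ yields the reverse inclusion, hence equality.

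For statement~(2), suppose $uv\in E(G)\setminus E(H)$. If we had $i=j$, the claim would force $u$ and $v$ to share a label in every $\val(\psi)$ above $\phi'$, so no operation $\eta_{a,b}$ with $a\neq b$ could create $uv$ — contradicting that $uv\notin E(H)$ is born above $\phi'$; thus $i\neq j$. Now fix the subexpression $\eta_{a,b}(\psi)$ that first creates $uv$, with $u$ of label $a$ and $v$ of label $b$ in $\val(\psi)$. For any $x,y\in V(H)$ with $\lab_H(x)=i$ and $\lab_H(y)=j$, the claim forces $x$ to have label $a$ and $y$ label $b$ in $\val(\psi)$, so the operation $\eta_{a,b}$ adds $xy$ and $xy\in E(G)$. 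Finally, irredundancy of $\eta_{a,b}(\psi)$ says $\val(\psi)$ has no edge between a label-$a$ vertex and a label-$b$ vertex, so $xy\notin E(\val(\psi))\supseteq E(H)$; hence $xy\in E(G)\setminus E(H)$.

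The main delicate point is the bookkeeping of labels: the labels $a,b$ witnessing the creation of $uv$ (or $uw$) live in $\val(\psi)$, not in $H$, so one must use the label-preservation claim both to transfer the argument from the fixed pair $u,v$ to an arbitrary pair $x,y$ with the same $H$-labels, and to invoke irredundancy at exactly the node $\eta_{a,b}(\psi)$ where the edge is born, combined with the monotonicity $E(H)\subseteq E(\val(\psi))$.
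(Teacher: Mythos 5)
Your proposal is correct and follows essentially the same route as the paper's proof: both arguments rest on the observation that vertices sharing a label in $H$ keep sharing a label in every supergraph $\val(\psi)$ above $\phi'$, then locate the $\eta_{a,b}$ node where the relevant edge is first created and invoke irredundancy there. You merely make explicit (via induction on the operations) the label-preservation claim that the paper states in one sentence.
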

	\begin{proof}
		(1) Assume that  $i=j$. 
		Let $\phi'$ be the subexpression of $\phi$ such that $H= \val(\phi')$.
		As $u$ and $v$ have the same label in $H$, 
		in every subexpression of $\phi$ having $\phi'$ as a subexpression, $u$ and $v$ have the same label. 
		Since edges are added only through the operation $\eta_{a,b}$, we conclude that $ N_G(u)\cap (V(G)\setminus V(H)) = N_G(v)\cap (V(G)\setminus V(H))$.
		
		\medskip
		(2) 
		Assume that $uv\in E(G)\setminus E(H)$. Then, we have $i\neq j$ because the operation $\eta_{a,b}$ adds edges only between vertices with distinct labels.
		Let $\phi'$ be the minimal (size wise) subexpression of $\phi$ such that $uv\in E(\val(\phi'))$.
		It follows that $\phi'=\eta_{a,b}(\phi^\star)$, with $\phi^\star\in\Sub(\phi)$, $\lab_{\val(\phi')}(u)=a$ and $\lab_{\val(\phi')}(v)=b$.
		Let $D:=\val(\phi^\star)$.
		Observe that we have $V(H)\subseteq V(D)$ and $E(H)\subseteq E(D)$. 
		Moreover, all vertices labeled $i$ in $H$ are labeled $a$ in $D$ and those labeled $j$ in $H$ are labeled $b$ in $D$.
		Since $\phi$ is irredundant, there are no edges in $D$ between a vertex labeled $a$ and one labeled $b$.
		Thus, for all vertices $x\in \lab_H^{-1}(i)$ and $y\in \lab_H^{-1}(j)$, we have $xy \notin E(H)$ and $xy\in E(G)$.
	\end{proof}

	\section{Partial solutions and auxiliary graphs}\label{sec:path-col}
	
	Let $G$ be a graph and $(H,\lab_H)$ be a $k$-labeled graph such that $H$ is a subgraph of $G$.
	
	A \emph{partial solution} of $H$ is a set of edges $\cP\subseteq E(H)$ such that $H_{|\cP}$ is a union of vertex-disjoint paths, \ie, $H_{|\cP}$ is acyclic and, for every vertex $v\in V(H)$, the degree of $v$ in $H_{|\cP}$ is at most two.
	We denote by $\Pi(H)$ the set of all partial solutions of $H$. We say that a path $P$ in $H_{|\cP}$ is \emph{maximal} if the degree of its end-vertices in $H_{|\cP}$ is at most one; in other words, there is no path $P'$ in $H_{|\cP}$ such that $V(P)\subsetneq V(P')$. Observe that an isolated vertex in $H_{|\cP}$ is considered as a maximal path.
	
	A \emph{complement solution} of $H$ is a subset $\cQ$ of $E(G)\setminus E(H)$ such that $G_{|\cQ}$ is a union of vertex-disjoint paths with end-vertices in $V(H)$; in particular, for every vertex $v$ in $V(G)\setminus V(H)$, the degree of $v$ in $G_{|\cQ}$ is two.
	We denote by $\overline{\Pi}(H)$ the set of all complement solutions of $H$.
	A path $P$ in $G$ with at least $2$ vertices is an \emph{$H$-path} if the end-vertices of $P$ are in $V(H)$ and the internal vertices of $P$ are in $V(G)\setminus V(H)$. 
	By definition, isolated vertices in $V(H)$ are not $H$-paths.
	Observe that, for a complement solution $\cQ$, we can decompose each maximal path $Q$ of $G_{|\cQ}$ with at least $2$ vertices into $H$-paths (not necessarily one). 
	
	Examples of a partial solution and a complement solution are given in Figure \ref{fig:partialsolution}.
	Note that if $G$ has a Hamiltonian cycle $C$ and $E(C)\not\subseteq E(H)$, then $E(C)\cap E(H)$ is a partial solution and $E(C)\cap (E(G)\setminus E(H))$ is a complement solution. 
	We say that a partial solution $\cP$ and a complement solution $\cQ$ form a Hamiltonian cycle 
	if $(V(G), \cP\cup \cQ)$ is a cycle containing all the vertices of $G$.
	
	\begin{figure}[h]
		\includegraphics[width=0.8 \columnwidth]{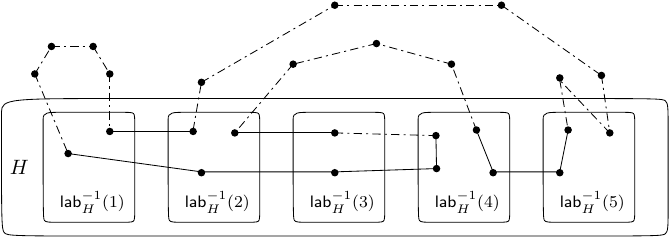}
		\caption{Examples of a partial solution $\cP$ (solid lines) and a complement solution $\cQ$ (dashed lines) forming a Hamiltonian cycle. Observe that $H_{|\cP}$ contains 5 maximal paths and $G_{|\cQ}$ contains $5$ $H$-paths (and only $4$ maximal paths).}
		\label{fig:partialsolution}
	\end{figure}
	\medskip
	\paragraph{\bf Auxiliary Multigraph} For $\cP\in \Pi(H)\cup \overline{\Pi}(H)$ and $i,j\in [k]$, 
	we define $\ell_{ij}$ and $\ell_i$ as follows. 
	\begin{itemize}
		\item If $\cP$ is a partial solution of $H$, then $\ell_{ij}$ is the number of maximal paths in $H_{|\cP}$ with end-vertices labeled $i$ and $j$, and $\ell_i$ is the number of maximal paths in $H_{|\cP}$ with both end-vertices labeled $i$. 
		\item If $\cP$ is a complement solution of $H$, then $\ell_{ij}$ is the number of $H$-paths in $G_{|\cP}$ with end-vertices labeled $i$ and $j$, and $\ell_i$ is the number of $H$-paths in $G_{|\cP}$ with both end-vertices labeled $i$. 
	\end{itemize}
	Now, 
	we define the auxiliary multigraph of $\cP$, denoted by $\aux_H(\cP)$,  as the multigraph with vertex set $\{v_1,\dots,v_k\}$ and  edge set
	\[ \bigcup_{ \substack{i,j\in [k] \\ i\neq j}} \{\{v_i,v_j\}_1,\ldots,\{v_i,v_j\}_{\ell_{ij}}\} \cup \bigcup\limits_{i\in [k]} \{ \{v_i\}_1,\ldots,\{v_i\}_{\ell_{i}} \}. \]
	Moreover, if $\cP$ is a partial solution of $H$, then we color all the edges of $\aux_H(\cP)$ in red, 
	and if $\cP$ is a complement solution, then we color the edges of $\aux_H(\cP)$ in blue.
	An example of an auxiliary multigraph is given in Figure~\ref{fig:aux}.
	\begin{figure}[h]
		\includegraphics[width=0.7\columnwidth]{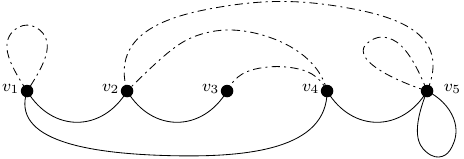}
		\caption{The union $G_1\uplus G_2$ of auxiliary multigraphs $G_1$ and $G_2$ associated with the partial solution (solid lines) and the complement solution (dashed lines) represented in Figure \ref{fig:partialsolution}.}
		\label{fig:aux}
	\end{figure}
	
	\section{Relations between Hamiltonian cycles and Eulerian trails}\label{sec:toolkit}
	
	Let $G$ be an $n$-vertex graph and $\phi$ be an irredundant $k$-expression of $G$.
	Let $H$ be a $k$-labeled graph arising in the $k$-expression $\phi$.
	Observe that $H$ is a subgraph of $G$ (disregarding the labels).
	This section is dedicated to prove the properties of the following relation between partial solutions of $H$ based on the degree sequence and the connected components of their auxiliary multigraphs.
	
	\begin{definition}\label{def:equiv}
		Let $\cP_1, \cP_2\in \Pi(H)$.
		We write $\cP_1 \simeq \cP_2$ if $\aux_H(\cP_1)$ and $\aux_H(\cP_2)$ have the same set of connected components and 
		for each vertex $v$ in $\{v_1,\dots,v_k\}$, $\deg_{\aux_H(\cP_1)}(v)=\deg_{\aux_H(\cP_2)}(v)$.

	Observe that $\simeq$ is an equivalence relation.
		For a set $\cA\subseteq\Pi(H)$, we define $\reduce_H(\cA)$ as the operation which returns a set containing one element of each equivalence class of $\cA/\simeq$.
	\end{definition}

	The main idea of our algorithm is to call $\reduce_H$ at each step of our dynamic programming algorithm in order to keep the size of a set of partial solutions manipulated small, \ie, bounded by $n^{O(k)}$.
	The running time of our algorithm follows mostly from the following lemma.
	
	\begin{lemma}\label{lem:nbequivalence}
		For every $\cA\subseteq \Pi(H)$, we have $|\reduce_H(\cA)|\leq n^{k}\cdot 2^{k(\log_2(k)+ 1)}$ and we can moreover compute $\reduce_H(\cA)$ in time
          $O(|\cA| \cdot nk^2 \log_2(n k))$.
	\end{lemma}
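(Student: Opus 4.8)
**The plan is to bound the number of equivalence classes and then give an efficient algorithm to compute one representative per class.**

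First I would count the number of distinct equivalence classes. By Definition~\ref{def:equiv}, an equivalence class of $\simeq$ is determined by two data: the set of connected components of $\aux_H(\cP)$ (viewed as a partition of the vertex set $\{v_1,\dots,v_k\}$), and the degree sequence $(\deg_{\aux_H(\cP)}(v_1),\dots,\deg_{\aux_H(\cP)}(v_k))$. The partition of a $k$-element set can be encoded by a function $\{v_1,\dots,v_k\}\to[k]$ assigning each vertex the index of its component, giving at most $k^k = 2^{k\log_2 k}$ partitions. For the degree sequence, each $\deg_{\aux_H(\cP)}(v_i)$ is a non-negative integer; since a partial solution $\cP$ induces at most $n$ maximal paths and each contributes at most two to the total degree count, each individual degree is bounded by $2n$, and I would argue the relevant bound so that the number of degree sequences is at most $(2n)^k = n^k\cdot 2^k$. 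Multiplying the two counts yields at most $n^k\cdot 2^{k\log_2 k}\cdot 2^k = n^k\cdot 2^{k(\log_2(k)+1)}$ equivalence classes, which is exactly the claimed bound on $|\reduce_H(\cA)|$.

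Next I would describe the computation of $\reduce_H(\cA)$. The natural approach is to assign to each partial solution $\cP\in\cA$ a canonical \emph{signature} capturing its equivalence class, namely the pair consisting of its partition into connected components and its degree sequence, then group the elements of $\cA$ by equal signatures and output one representative from each group. To make this efficient I would maintain a balanced search structure (e.g.\ a binary search tree or sorted list) keyed on signatures: for each $\cP\in\cA$ I compute its signature and insert it, discarding $\cP$ if an equivalent representative is already present. Computing the signature of a single $\cP$ requires building $\aux_H(\cP)$ and extracting its connected components; since $\aux_H(\cP)$ has $k$ vertices and $O(n)$ edges, this takes $O(nk)$ time (the components can be found by a union-find or traversal on the multigraph), and encoding a signature takes $O(k\log_2(nk))$ bits. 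Each insertion into the search structure requires $O(\log_2|\cA|)$ comparisons, and each comparison of two signatures costs $O(k\log_2(nk))$; since $|\cA|$ is at most the total number of edge subsets, $\log_2|\cA| = O(k\log_2 n)$, so the per-element cost is $O(nk + k\log_2(nk)\cdot k\log_2 n) = O(nk^2\log_2(nk))$. Summing over all $|\cA|$ elements gives the claimed running time $O(|\cA|\cdot nk^2\log_2(nk))$.

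The main obstacle I expect is not the counting argument but the careful accounting that makes the running-time bound come out cleanly. In particular, one must verify that the signatures are genuinely \emph{canonical} — that two partial solutions receive identical signatures if and only if they are $\simeq$-equivalent — which requires fixing a canonical encoding of the unordered partition into components (for instance, labeling components by the smallest vertex index they contain) so that equivalent solutions are never stored twice and inequivalent ones are never merged. The other delicate point is bounding the cost of comparing signatures and of the connected-component computation tightly enough to absorb the factor $k^2$, rather than a larger power of $k$; here I would rely on the fact that a signature has length $O(k\log_2(nk))$ and that comparisons and component extraction on a $k$-vertex multigraph with $O(n)$ edges stay within the stated budget.
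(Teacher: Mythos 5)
Your proposal follows essentially the same route as the paper: bound the number of classes by (number of degree sequences) $\times$ (number of partitions of a $k$-set) $\le (2n)^k\cdot k^k$, and compute representatives by inserting canonical signatures into a balanced search tree. The counting argument is fine (your justification that the at most $n$ vertex-disjoint maximal paths contribute total degree at most $2n$ is a legitimate variant of the paper's edge-removal induction), and the canonical-encoding concern you raise is easily handled as you suggest.

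One step of your running-time analysis is wrong as stated, though the conclusion survives. You bound the number of comparisons per insertion by $O(\log_2|\cA|)$ and then claim $\log_2|\cA| = O(k\log_2 n)$ ``since $|\cA|$ is at most the total number of edge subsets.'' That inference is false: $\Pi(H)$ can contain up to $2^{|E(H)|} = 2^{\Theta(n^2)}$ partial solutions, so $\log_2|\cA|$ can be $\Theta(n^2)$, and the lemma quantifies over \emph{all} $\cA\subseteq\Pi(H)$. The correct argument — which your own data structure already supports, and which the paper uses — is that the tree stores only one key per equivalence class (duplicates are discarded on insertion), so its size never exceeds $|\reduce_H(\cA)| \le n^k\cdot 2^{k(\log_2 k+1)}$ and each insertion costs $O(\log_2|\reduce_H(\cA)|) = O(k\log_2(nk))$ comparisons. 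With that substitution your accounting goes through and matches the paper's bound.
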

	\begin{proof}
		To prove that $\reduce_H(\cA)\leq n^{k}\cdot 2^{k(\log_2(k)+ 1)}$ , it is enough to bound the number of equivalence classes of $\Pi(H)/\simeq$.
		
		We claim that, for every $\cP\in \Pi(H)$, we have $\sum_{i\in [k]}\deg_{\aux_H(\cP)}(v_i) \leq 2 |V(H)|$.
		First observe that $\sum_{i\in [k]}\deg_{\aux_H(\cP)}(v_i) = 2 |V(H)|$ when $\cP=\emptyset$, since each isolated vertex in $H_{|\cP}$ gives a loop in $\aux_H(\cP)$. 
		Moreover, when $\cP$ contains an edge, removing an edge from a partial solution $\cP$ of $H$ increases $\sum_{i\in [k]}\deg_{\aux_H(\cP)}(v_i)$ by two; indeed, this edge removal splits a maximal path of $H_{|\cP}$ into two maximal paths.
		Therefore, any partial solution $\cP$ satisfies that $\sum_{i\in [k]}\deg_{\aux_H(\cP)}(v_i) \leq 2 |V(H)|$;
		in particular each vertex of $\aux_H(\cP)$ has degree at most $2|V(H)|$.
		As $\aux_H(\cP)$ contains $k$ vertices, we deduce that there are at most $(2|V(H)|)^k\leq (2n)^k$ possible degree sequences for $\aux_H(\cP)$.

		Since the number of partitions of $\{v_1,\dots,v_k\}$ is bounded by $2^{k \log_2 k}$. 
		We conclude that $\simeq$ partitions $\Pi(H)$ into at most $n^{k} \cdot2^{k (\log_2 k+ 1)}$ equivalences classes.
		
		It remains to prove that we can compute $\reduce_H(\cA)$ in time $O(|\cA| \cdot n k \log_2(nk))$.
		First observe that, for every $\cP\in \Pi(H)$, we can compute $\aux_H(\cP)$ in time $O(nk)$.
		Moreover, we can also compute the degree sequence of $\aux_H(\cP)$ and the connected components of $\aux_H(\cP)$ in time $O(nk)$.
		Thus, by using the right data structures, we can decide whether $\cP_1\simeq\cP_2$ in time $O(n k)$.
		Furthermore, by using a self-balancing binary search tree, we can compute $\reduce_H(\cA)$ in time  $O(|\cA| \cdot n k \log_2(|\reduce_H(\cA)|))$.
		Since $\log_2(|\reduce_H(\cA)|)\leq k \log_2(2nk)$, we conclude that $\reduce_H(\cA)$ is computable in time $O(|\cA| \cdot n k^2 \log_2(nk))$.
	\end{proof}
	
	The rest of this section is dedicated to prove that, for a set of partial solutions $\cA$ of $H$, the set $\reduce_H(\cA)$ is equivalent to $\cA$, \ie, if $\cA$ contains a partial solution that forms a Hamiltonian cycle with a complement solution, then $\reduce_H(\cA)$ also.
	Our results are based on a kind of equivalence between Hamiltonian cycles and red-blue  Eulerian trails.
	The following observation is one direction of this equivalence.  
	
	\begin{lemma}\label{lem:easyway}
		If $\cP\in \Pi(H)$ and $\cQ\in \overline{\Pi}(H)$ form a Hamiltonian cycle, then the multigraph $\aux_H(\cP)\uplus \aux_H(\cQ)$ admits a red-blue  Eulerian trail.
	\end{lemma}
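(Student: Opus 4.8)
The plan is to read off the required trail directly from the given Hamiltonian cycle $C:=(V(G),\cP\cup\cQ)$ by traversing it once. Set $M:=\aux_H(\cP)\uplus\aux_H(\cQ)$, and color every edge of $C$ red if it lies in $\cP$ and blue if it lies in $\cQ$; by the definition of $\aux_H$ the red edges of $M$ are in bijection with the maximal paths of $H_{|\cP}$ (a one-vertex maximal path giving a red loop) and the blue edges of $M$ are in bijection with the $H$-paths of $G_{|\cQ}$. The first step is to record how $C$ decomposes into these pieces. Since $\cP\subseteq E(H)$ touches only $V(H)$ while every vertex of $V(G)\setminus V(H)$ has both of its $C$-edges in $\cQ$, the maximal red subpaths of $C$ are exactly the maximal paths of $H_{|\cP}$ with at least one edge, and the maximal blue subpaths of $C$ are exactly the maximal paths of $G_{|\cQ}$ with at least one edge; moreover such a maximal blue subpath splits into $H$-paths precisely at the vertices of $V(H)$ it meets internally, which are exactly the isolated vertices of $H_{|\cP}$ (those of degree $0$ in $H_{|\cP}$, hence of degree $2$ in $G_{|\cQ}$).

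I would then build a walk $W$ in $M$ by scanning $C$ once, starting at an end-vertex of a maximal red subpath and heading into it; such a subpath exists because $\cP\neq\emptyset$, since otherwise $G_{|\cQ}$ would be the whole cycle $C$ rather than a union of vertex-disjoint paths. Each maximal red subpath, say between end-vertices with labels $i$ and $j$, is replaced by its red edge $\{v_i,v_j\}$ of $\aux_H(\cP)$, and each $H$-path, say between end-vertices with labels $i$ and $j$, is replaced by its blue edge of $\aux_H(\cQ)$. In addition, each time the scan passes through an isolated vertex of $H_{|\cP}$ with label $i$, I insert the red loop $\{v_i\}$ of $\aux_H(\cP)$ that this one-vertex maximal path contributes. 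Consecutive pieces of $C$ meet at a vertex of $V(H)$, so the corresponding edges of $M$ meet at the corresponding $v_\ell$, and since $C$ is a single cycle, $W$ is a closed walk.

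It remains to verify the three defining properties of a red-blue Eulerian trail. For the Eulerian property, the scan meets each maximal path of $H_{|\cP}$ (with or without edges) and each $H$-path of $G_{|\cQ}$ exactly once and emits exactly the corresponding edge or loop; by the definition of $\aux_H$ this is a bijection onto $E(M)$, so $W$ is a closed trail using every edge exactly once. For the alternation I would run the degree trichotomy at each vertex $x\in V(H)$, using that $\deg_C(x)=2$: if $\deg_{H_{|\cP}}(x)=2$ then $x$ is internal to a red piece and no transition occurs; if $\deg_{H_{|\cP}}(x)=1$ then $x$ is the common end of one red piece and one $H$-path, giving a clean red–blue transition; and if $\deg_{H_{|\cP}}(x)=0$ then $x$ lies between two $H$-paths, where the inserted red loop makes the local pattern blue–red–blue. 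Thus the colors alternate throughout, and because the scan begins with a red edge and $W$ is closed, they also alternate at the point where $W$ closes up.

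The step I expect to be delicate is the bookkeeping around the isolated vertices of $H_{|\cP}$: they carry no edge of $C$ yet must supply a red loop to $M$, and it is precisely the insertion of these loops at the internal $V(H)$-vertices of the blue subpaths that simultaneously repairs the alternation and accounts for the remaining red edges of $M$ in the exact-cover count. Making the trichotomy $0/1/2$ of $\deg_{H_{|\cP}}(x)$ correspond to the incidence patterns two-blue, one-red-one-blue, and two-red in $C$ is what forces both conditions to come out, and it rests on the facts that each vertex of $V(H)$ has total $C$-degree $2$ and that vertices outside $V(H)$ are incident only with blue edges.
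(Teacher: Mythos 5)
Your proof is correct and follows essentially the same route as the paper: traverse the Hamiltonian cycle once, decompose it into the maximal paths of $H_{|\cP}$ (with isolated vertices contributing red loops) alternating with the $H$-paths of $G_{|\cQ}$, and map each piece to its edge in $\aux_H(\cP)\uplus\aux_H(\cQ)$. The paper states this as the existence of an alternating sequence $(P_1,Q_1,\dots,P_\ell,Q_\ell)$ and leaves the verification implicit, whereas you spell out the degree trichotomy and the role of the loops at isolated vertices; this is just a more detailed write-up of the same argument.
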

	\begin{proof}
		Suppose that $\cP\in \Pi(H)$ and $\cQ\in \overline{\Pi}(H)$ form a Hamiltonian cycle $C$.
		Let $M:=\aux_H(\cP)\uplus \aux_H(\cQ)$.
		From the definitions of a partial solution and of a complement solution, 
		there is a sequence $(P_1, Q_1, \ldots, P_{\ell}, Q_{\ell})$ of paths in $\cP$ and $\cQ$ such that  
		\begin{itemize}
		\item $P_1, P_2, \ldots, P_{\ell}$ are all the maximal paths in $H_{|\cP}$,
		\item $Q_1, Q_2, \ldots, Q_{\ell}$ are all the $H$-paths in $G_{|\cQ}$, 
		\item $P_1, Q_1, \ldots, P_{\ell}, Q_{\ell}$ appear in $C$ in this order,  
		\item for each $x\in[\ell]$, the first end-vertex of $P_x$ is the last end-vertex of $Q_{x-1}$ and the last end-vertex of $P_x$ is the first end-vertex of $Q_x$ (indices are considered modulo $\ell$). 
		\end{itemize}
		
		Observe that each maximal path $P_x$ in $H_{|\cP}$ with end-vertices labeled $i$ and $j$ is associated with a red edge in $M$, say $e_x$ with $\mult_M(e_x)=\{v_i,v_j\}$ if $i\neq j$ or $\mult_M(e_x)=\{v_i\}$ if $i=j$ such that the edges $e_1,\dots,e_\ell$ are pairwise distinct and $E(\aux_H(\cP))=\{e_1,\dots,e_\ell\}$.
		Similarly, each $H$-path $Q_y$ of $G_{|\cQ}$ with end-vertices labeled $i$ and $j$ is associated with a blue edge $f_y$ in $M$ with $\mult_M(f_y)=\{v_i,v_j\}$ if $i\neq j$ or $\mult_M(f_y)=\{v_i\}$ if $i=j$ such that the edges $f_1,\dots,f_\ell$ are pairwise distinct and $E(\aux_H(\cQ))=\{f_1,\dots,f_\ell\}$.
		It is not difficult to see that 
		$(e_1, f_1, \ldots, e_{\ell}, f_{\ell})$ is a red-blue  Eulerian trail of $\aux_H(\cP)\uplus \aux_H(\cQ)$.
	\end{proof}

	Next, we prove the other direction. We use the properties of an irredundant $k$-expression described in Lemma~\ref{lem:cwproperties}.  
	\begin{lemma}\label{lem:EulerianHamiltonian}
		Let $\cP\in \Pi(H)$. 
		If there exists a complement solution $\cQ$ of $H$ such that $\aux_H(\cP)\uplus \aux_H(\cQ)$ admits a red-blue  Eulerian trail, then there exists $\cQ^\star\in \overline{\Pi}(H)$ such that $\cP$ and $\cQ^\star$ form a Hamiltonian cycle.
	\end{lemma}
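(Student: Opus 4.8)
The plan is to establish the converse of Lemma~\ref{lem:easyway} by turning the given red-blue Eulerian trail into an explicit cyclic arrangement of real paths in $G$, and then repairing the endpoints of the complement solution so that it glues correctly onto the fixed partial solution $\cP$. Concretely, since the trail alternates colors and is closed, $\aux_H(\cP)$ and $\aux_H(\cQ)$ have the same number $\ell$ of edges, and the trail can be written cyclically as $(e_1,f_1,\dots,e_\ell,f_\ell)$ with each $e_x$ red and each $f_x$ blue. Writing $v_{s_x}$ for the aux-vertex shared by $e_x$ and $f_x$ and $v_{t_x}$ for the one shared by $f_x$ and $e_{x+1}$ (indices mod $\ell$), the red edges $e_1,\dots,e_\ell$ are, by construction of $\aux_H(\cP)$, in bijection with the maximal paths $P_1,\dots,P_\ell$ of $H_{|\cP}$; I let $a_x,b_x$ denote the end-vertices of $P_x$ labeled $t_{x-1}$ and $s_x$ respectively (so $a_x=b_x$ when $P_x$ is an isolated vertex and $e_x$ is a loop). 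Likewise the blue edges are in bijection with the $H$-paths $R_1,\dots,R_\ell$ of $G_{|\cQ}$, where $R_x$ has one end labeled $s_x$ and the other labeled $t_x$.

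The key step is to keep $\cP$ (hence every $P_x$ and its end-vertices) untouched and to reroute each $H$-path onto the end-vertices prescribed by the trail. For each $x$, I take $R_x=(g_x,w_1,\dots,w_m,h_x)$ with $g_x$ labeled $s_x$ and $h_x$ labeled $t_x$, discard the two end-vertices $g_x,h_x\in V(H)$ while keeping all internal vertices $w_1,\dots,w_m\in V(G)\setminus V(H)$, and reattach the path to $b_x$ and $a_{x+1}$, obtaining $R_x^\star:=(b_x,w_1,\dots,w_m,a_{x+1})$. The crux---and the place where irredundancy is used---is to check that $R_x^\star$ is again a path of $G$ whose edges lie in $E(G)\setminus E(H)$. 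When $m\geq 1$, the edges $b_xw_1$ and $a_{x+1}w_m$ exist because $b_x,g_x$ share the label $s_x$ and $a_{x+1},h_x$ share the label $t_x$, so Lemma~\ref{lem:cwproperties}(1) gives $N_G(b_x)\setminus V(H)=N_G(g_x)\setminus V(H)$ and similarly at the other end, while $w_1,w_m\notin V(H)$; the interior edges are unchanged. When $m=0$, $R_x=(g_x,h_x)$ is a single edge of $E(G)\setminus E(H)$, so by Lemma~\ref{lem:cwproperties}(2) we have $s_x\neq t_x$ and $b_xa_{x+1}\in E(G)\setminus E(H)$. I then set $\cQ^\star:=\bigcup_{x}E(R_x^\star)$.

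Finally I would verify that $\cP\cup\cQ^\star$ is a Hamiltonian cycle. Traversing $P_1,R_1^\star,P_2,R_2^\star,\dots,P_\ell,R_\ell^\star$ in order yields a closed walk, since $P_x$ runs from $a_x$ to $b_x$, $R_x^\star$ from $b_x$ to $a_{x+1}$, and the last piece returns to $a_1$. The maximal paths $P_x$ are pairwise vertex-disjoint and span $V(H)$ (isolated vertices included), and the interiors of the $R_x^\star$ are exactly those of the $R_x$, hence pairwise disjoint and covering every vertex of $V(G)\setminus V(H)$ (each such vertex has degree two in $G_{|\cQ}$ and so is internal to a unique $H$-path). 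Because $b_x,a_{x+1}$ range, as $x$ varies, over all end-vertices of all $P_x$ exactly once, the only vertices shared between consecutive pieces are the prescribed junctions, so every vertex of $G$ is visited exactly once and has degree two; thus the walk is a Hamiltonian cycle $C$. Since $E(C)\cap(E(G)\setminus E(H))=\cQ^\star$, the set $\cQ^\star$ is a complement solution and $\cP,\cQ^\star$ form a Hamiltonian cycle, as required.

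I expect the main obstacle to be the endpoint-rerouting step: one must argue, using precisely the irredundancy of $\phi$ via Lemma~\ref{lem:cwproperties}, that replacing the $H$-endpoints of each $R_x$ by vertices of the same label again produces valid $G$-edges outside $E(H)$, and then track the degenerate cases---isolated vertices of $\cP$, which become junctions of two $H$-paths, and $H$-paths whose two ends carry the same label---to confirm that the reassembled object is genuinely a single cycle rather than a disjoint union of closed walks.
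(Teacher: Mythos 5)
Your proposal is correct and follows essentially the same route as the paper's own proof: read off the alternating closed trail as a cyclic sequence of red/blue aux-edges, keep the maximal paths of $H_{|\cP}$ fixed, and reroute each $H$-path of $G_{|\cQ}$ onto the prescribed end-vertices using Lemma~\ref{lem:cwproperties}(1) for paths with internal vertices and Lemma~\ref{lem:cwproperties}(2) for single-edge paths, exactly as in the paper. Your additional remarks on why the concatenation is a single cycle (each end-vertex of each $P_x$ is used exactly once because the trail is a single closed trail using every edge) make explicit a point the paper leaves implicit, but the argument is the same.
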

	\begin{proof}
		Let $T=(v_{a_1},r_1,v_{c_1},b_1,v_{a_{2}},r_2,v_{c_2},\cdots, v_{a_\ell},r_\ell,v_{c_\ell},b_\ell,v_{a_1})$ be a red-blue  Eulerian trail of $\aux_H(\cP)\uplus \aux_H(\cQ)$ with $r_1,\dots,r_\ell\in E(\aux_H(\cP))$ and $b_1,\dots,b_\ell \in E(\aux_H(\cQ))$. 
		In the following, the indexes are modulo $\ell$.
		
		For each $i\in [\ell]$, we associate $r_i$ with a maximal path $P_i$ of $H_{|\cP}$ with end-vertices labeled $a_i$ and $c_{i}$  and we associate $b_i$ with an $H$-path $Q_i$ of $G_{|\cQ}$ with end-vertices labeled $c_{i}$ and $a_{i+1}$, such that $P_1, \ldots, P_{\ell}, Q_1, \ldots, Q_{\ell}$ are all pairwise distinct.
		
		For every $i\in [\ell]$, we construct from $Q_i$ an $H$-path $Q_i^\star$ of $G$ by doing the following.
		Let $u,v$ be respectively the last end-vertex of $P_i$ and the first end-vertex of $P_{i+1}$.
		Observe that $u$ and the first vertex of $Q_i$ are labeled $c_i$, and $v$ and the last vertex of $Q_i$ are labeled $a_{i+1}$.
		We distinguish two cases:
		\begin{itemize}
			\item Suppose that $Q_i=(x,xy,y)$, \ie, $Q_i$ uses only one edge.
			Since $\cQ$ is a complement solution of $H$, we have $xy\in E(G)\setminus E(H)$.
			By Lemma \ref{lem:cwproperties}, we have $uv\in E(G)\setminus E(H)$.
			In this case, we define $Q^\star_i=(u,uv,v)$.
			
			\item Assume now that $Q_i=(x,xy,y,\dots,w,wz,z)$ with $w,y\in V(G)\setminus V(H)$ (possibly, $w=y$).
			Since $x$ and $u$ have the same label in $H$, by Lemma \ref{lem:cwproperties}, we have $N_G(x)\setminus V(H) = N_G(u)\setminus V(H)$.
			Hence, $u$ is also adjacent to $y$. Symmetrically, we have $v$ is adjacent to $w$.
			In this case, we define $Q_i^\star=(u,uy,y,\dots,w,wv,v)$, \ie, the path with the same internal vertices as $Q_i$ and with end-vertices $u$ and $v$.
		\end{itemize}
		
		In both cases, we end up with a path $Q_i^\star$ that uses the same internal vertices as $Q_i$ and whose end-vertices are the last vertex of $P_i$ and the first vertex of $P_{i+1}$. 
		We conclude that 
		\[P_1 -  Q_1^\star - \cdots - P_\ell - Q_\ell^\star\] is a Hamiltonian cycle.
		
		Let $\cQ^\star$ be the set of edges used by the paths $Q_1^\star,\dots,Q_\ell^\star$.
		By construction, we have  $\cQ^\star\subseteq E(G)\setminus E(H)$, and thus $\cQ^\star \in \overline{\Pi}(H)$.
		Observe that, for every $i\in [\ell]$, the labels of the end-vertices of $Q_i^\star$ are the same as those of $Q_i$.
		Consequently, we have $\aux_H(\cQ^\star)=\aux_H(\cQ)$.
	\end{proof}

	It is well known that a connected multigraph contains an Eulerian trail if and only if every vertex has even degree. 
	As an extension, Kotzig~\cite{Kotzig68} proved that a connected two-edge colored graph (without loops and multiple edges) contains a red-blue  Eulerian trail if and only if each vertex is incident with the same number of edges for both colors. This result can be easily generalized to multigraphs by replacing red edge with a path of length $3$ whose colors are red, blue, red in the order, and replacing blue edge with a path of length $3$ whose colors are blue, red, blue in the order.
	For the completeness of our paper, we add its proof.

	Let $G$ be a multigraph whose edges are colored red or blue, and let $R$ and $B$ be respectively the set of red and blue edges. For a vertex $v\in V(G)$, we let 
        $\rdeg_G(v)$ and $\bdeg_G(v)$ be respectively the degree of $v$ in $G_{|R}$ and $G_{|B}$. Recall that a loop is counted twice in the degree of a vertex.
	\begin{lemma}[Kotzig~\cite{Kotzig68}]\label{lem:euleriantrail}
		Let $G$ be a connected multigraph whose edges are colored red or blue. 
		Then $G$ has a red-blue Eulerian trail if and only if, for every vertex $v$, $\bdeg_G(v)=\rdeg_G(v)$. 
	\end{lemma}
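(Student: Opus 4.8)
The plan is to prove the two implications separately: the necessity direction is a direct counting argument, while the sufficiency direction requires a construction together with a merging procedure driven by connectivity.

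For necessity, I would start from a red-blue Eulerian trail $T$ and regard it as a cyclic sequence of edges $e_1,e_2,\dots,e_m$ in which consecutive edges (cyclically, since $T$ is closed) receive opposite colors. Between $e_i$ and $e_{i+1}$ lies one occurrence of a vertex, which I would call a \emph{transition}; each transition is incident with exactly one red and one blue edge-endpoint. Since every edge contributes exactly two endpoints, each belonging to exactly one transition, the edge-endpoints (darts) at a fixed vertex $v$ are partitioned by the transitions occurring at $v$, and each such transition carries one red and one blue dart of $v$. Hence the number of red darts at $v$ equals the number of blue darts at $v$, that is $\rdeg_G(v)=\bdeg_G(v)$. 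Counting darts rather than edges makes loops behave correctly, since a loop at $v$ simply contributes two darts at $v$.

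For sufficiency, assume $G$ is connected and $\rdeg_G(v)=\bdeg_G(v)$ for every $v$. The first step is to fix, at each vertex $v$, a perfect matching $M_v$ between the red darts and the blue darts at $v$, which exists precisely because these two sets have equal size. Reading $G$ through this transition system --- follow an edge to its other endpoint, cross to the dart matched to it by $M_v$, follow that edge, and continue --- decomposes the edge set of $G$ into a family $T_1,\dots,T_p$ of closed trails, each of which is red-blue alternating because every matched pair at a vertex joins edges of opposite colors. If $p=1$ we are done. When $p\ge 2$, I would first argue that some two trails share a common vertex: otherwise the trails split into two groups spanning disjoint vertex sets and covering disjoint nonempty edge sets, contradicting the connectivity of $G$. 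Given two distinct trails $T_a,T_b$ meeting at a vertex $v$, each uses a full matched pair at $v$ (a trail that reaches $v$ via a dart also leaves through its matched partner), say $M_v$ pairs $r_a$ with $b_a$ within $T_a$ and $r_b$ with $b_b$ within $T_b$. Re-pairing $r_a$ with $b_b$ and $r_b$ with $b_a$, and leaving all other matchings untouched, still pairs red with blue and, since $r_a,b_a$ and $r_b,b_b$ lie on different cycles of the transition permutation, splices $T_a$ and $T_b$ into a single closed alternating trail, decreasing $p$ by one. Iterating until $p=1$ yields the desired red-blue Eulerian trail.

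The routine parts are the dart-counting of the necessity direction and the verification that the transition system produces alternating closed trails. The main obstacle, and the step I would treat most carefully, is the merging argument: one must check both that the local re-pairing preserves red-blue alternation and that swapping two darts lying on \emph{different} cycles of the transition permutation fuses the two trails into one (rather than splitting a single trail), while confirming that connectivity always supplies a shared vertex at which to perform such a merge.
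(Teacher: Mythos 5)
Your proof is correct, but it takes a genuinely different route from the paper's. The paper argues in the style of Fleury's algorithm: it grows a single maximal red-blue trail, insisting at each step that the removed edges leave exactly one non-trivial component containing both endpoints, and the heart of the proof is a case analysis (on whether $v_{i+1}=v_1$ and on the color of the last edge) showing that one can always pick a next edge of the opposite color that is not a cut-edge, using parity of degrees to bound the number of cut-edges at the current endpoint. Your argument is instead in the style of Hierholzer (and is essentially Kotzig's theorem on alternating Eulerian circuits): you fix a red-to-blue perfect matching of darts at every vertex, which exists exactly because $\rdeg_G(v)=\bdeg_G(v)$, read off a decomposition of $E(G)$ into alternating closed trails as the orbits of the induced transition permutation, and then splice trails sharing a vertex by swapping two matched pairs, with connectivity guaranteeing such a shared vertex as long as more than one trail remains. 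What your approach buys is that the alternation constraint is enforced locally once and for all by the matchings, so the only global work is the standard ``a transposition across two cycles merges them'' argument, and you avoid the paper's somewhat delicate cut-edge bookkeeping; what the paper's approach buys is a self-contained extension argument that never invokes the transition-system/permutation formalism. Two small points you should make explicit when writing this up: the degenerate case $E(G)=\emptyset$ (where the single vertex itself is the required closed trail, and your decomposition has $p=0$), and the fact that every vertex lies on some trail of your decomposition --- this needs every vertex to have positive degree, which does follow from connectivity once $G$ has an edge, and is what makes the ``two groups with disjoint vertex sets'' contradiction go through.
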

	\begin{proof}
	One can easily check that if $G$ has a red-blue  Eulerian trail, then 
		for every vertex $v$, $\bdeg_G(v)=\rdeg_G(v)$. 
		Indeed, if
          $T=(v_1,e_1,v_2,\ldots,v_\ell,e_\ell,v_1)$ is a red-blue Eulerian trail, then, for $2\leq i \leq \ell$, $e_{i-1}$ and $e_i$ have different colors, and $e_1$ and $e_\ell$ have different colors,
          we can conclude that the blue edges incident with a vertex $v$ are in $1$-to-$1$ correspondence with the red edges incident with $v$ (by counting twice the loops).

	Let us now prove the other direction.
	Let $T:=(v_1,e_1,v_2,e_2, \cdots, v_i, e_i, v_{i+1})$ be a longest red-blue  trail. 
	We may assume that $e_1$ is colored red.
	First observe that $v_1=v_{i+1}$. Otherwise, $\bdeg_T(v_1)+1=\rdeg_T(v_1)$ and thus, there is a blue edge in $E(G)\setminus E(T)$ incident with $v_1$. 
	So, we can extend $T$ by adding this edge, a contradiction.
	Thus, $v_1=v_{i+1}$.
	
	Next we show that $e_i$ is colored blue. Suppose $e_i$ is colored red.
	Then $\bdeg_T(v_1)+2=\rdeg_T(v_1)$ and thus, there is a blue edge in $E(G)\setminus E(T)$ incident with $v_1$.
	So, we can extend $T$ by adding this edge, a contradiction.
	Thus, $e_i$ is colored red, and it implies that $T$ is a closed red-blue trail.
	It means that $T$ can be considered as a closed red-blue trail starting from any vertex of $T$ and following the same order or the reverse order of $T$.
	
	Now, we show that $V(G)=V(T)$. Suppose $V(G)\setminus V(T)$ is non-empty. Since $G$ is connected, 
	there is an edge $vw$ with $v\in V(T)$ and $w\in V(G)\setminus V(T)$. If $vw$ is a red edge, 
	then starting from this edge and following $T$ from a blue edge incident with $v$, we can find a red-blue trail longer than $T$, a contradiction.
	The same argument holds when $vw$ is a blue edge.
	Therefore, we have that $V(G)=V(T)$.
	By a similar argument, one can show that $E(G)=E(T)$; if there is an edge $vw$ in $E(G)\setminus E(T)$, we can extend $T$ by putting $vw$ at the beginning.
	So, $E(G)=E(T)$.
	
	We conclude that $T$ is a red-blue Eulerian trail, as required.
	\end{proof}

	In order to prove the correctness of our algorithm, we need the following relation between subsets of partial solutions.

	\begin{definition}\label{def:bigequiv}
		Let $\cA$ and $\cB$ be two subsets of $\Pi(H)$.
		We write $\cA\lesssim_H\cB$ if, for every multigraph $\cM$ whose edges are colored blue, whenever there exists $\cP_1\in \cB$ such that $\aux_H(\cP_1)\uplus \cM$ admits a red-blue  Eulerian trail, there exists $\cP_2\in \cA$ such that $\aux_H(\cP_2)\uplus\cM$ admits a red-blue  Eulerian trail.
	\end{definition}

	The main idea of our algorithm for \textsc{Hamiltonian Cycle}, is to compute, for every $k$-labeled graph $H$ arising in $\phi$, a set $\cA\subseteq\Pi(H)$ of small size such that $\cA\lesssim_H \Pi(H)$. Indeed, by Lemmas \ref{lem:easyway} and \ref{lem:EulerianHamiltonian}, $\cA\lesssim_H \Pi(H)$ implies that if there exist $\cP\in \Pi(H)$ and $\cQ\in\overline{\Pi}(H)$ such that $\cP$ and $\cQ$ form a Hamiltonian cycle, then there exist $\cP^\star\in\cA$ and $\cQ^\star\in \overline{\Pi}(H)$ such that $\cP^\star$ and $\cQ^\star$ form a Hamiltonian cycle.
	The following lemma is the key of our algorithm.

	\begin{lemma}\label{lem:reduce1}
		Let $\cA\subseteq \Pi(H)$. 
		Then $\reduce_H(\cA)\lesssim_H \cA$.
	\end{lemma}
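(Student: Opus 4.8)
The plan is to reduce the statement to one claim about the equivalence relation $\simeq$ of Definition~\ref{def:equiv}: for any $\cP_1,\cP_2\in\Pi(H)$ with $\cP_1\simeq\cP_2$ and any multigraph $\cM$ on $\{v_1,\dots,v_k\}$ whose edges are colored blue, the multigraph $\aux_H(\cP_1)\uplus\cM$ admits a red-blue Eulerian trail if and only if $\aux_H(\cP_2)\uplus\cM$ does. Granting this claim, the lemma follows at once: if some $\cP_1\in\cA$ satisfies that $\aux_H(\cP_1)\uplus\cM$ has a red-blue Eulerian trail, then by construction $\reduce_H(\cA)$ contains a representative $\cP_2$ of the $\simeq$-class of $\cP_1$, so $\cP_1\simeq\cP_2$, and the claim supplies a red-blue Eulerian trail in $\aux_H(\cP_2)\uplus\cM$. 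This is precisely what $\reduce_H(\cA)\lesssim_H\cA$ asks for in Definition~\ref{def:bigequiv}.

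To establish the claim I would first isolate a characterization of when $M:=\aux_H(\cP)\uplus\cM$ has a red-blue Eulerian trail, for an arbitrary $\cP$. Since every red edge of $M$ comes from $\aux_H(\cP)$ and every blue edge from $\cM$, I would note that $M$ admits a red-blue Eulerian trail if and only if (i) all edges of $M$ lie in a single connected component, and (ii) $\deg_{\cM}(v)=\deg_{\aux_H(\cP)}(v)$ for every vertex $v$. The forward direction is clear because a red-blue Eulerian trail is a closed walk covering all edges, so these edges span a connected subgraph (giving (i)), and restricting to the edge-carrying component and invoking Lemma~\ref{lem:euleriantrail} gives the degree balance (ii); conversely, under (i) and (ii) one applies Lemma~\ref{lem:euleriantrail} to the single edge-carrying component to produce a trail through all edges of $M$.

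The heart of the argument is then to check that conditions (i) and (ii) are insensitive to replacing $\cP_1$ by $\cP_2$ when $\cP_1\simeq\cP_2$. Condition (ii) depends on $\aux_H(\cP)$ only through its degree sequence, which $\simeq$ preserves, so (ii) transfers verbatim. For (i), I would assume (ii) (otherwise neither side has a trail), so that the non-isolated vertices of $M$ form the set $S=\{v:\deg_{\aux_H(\cP)}(v)>0\}$, identical for $\cP_1$ and $\cP_2$ by the shared degree sequence. The key point is that the connected components of $\aux_H(\cP)\uplus\cM$ are obtained from those of $\aux_H(\cP)$ by merging any two components joined by an edge of $\cM$; contracting each connected component of $\aux_H(\cP)$ to a super-node, the component partition of $M$ is determined solely by this partition and by $\cM$. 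Since $\cP_1\simeq\cP_2$ forces $\aux_H(\cP_1)$ and $\aux_H(\cP_2)$ to have the same partition of $\{v_1,\dots,v_k\}$ into connected components, the resulting quotients coincide, so $M$ has the same component partition in both cases, and hence ``$S$ lies in one component'' — condition (i) — holds for $\cP_1$ exactly when it holds for $\cP_2$.

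I expect the main obstacle to be the clean formalization of this last connectivity step: one must argue rigorously that the connectivity of $\aux_H(\cP)\uplus\cM$ depends on $\aux_H(\cP)$ only through its partition into connected components, which is exactly what the contraction-to-super-nodes observation delivers (here the fact that $\simeq$ records the full vertex partition, not merely which vertices are isolated, is what makes the merging behavior under $\cM$ identical). The degree bookkeeping in (ii) and the reduction of the whole lemma to the $\simeq$-class claim are routine once Lemma~\ref{lem:euleriantrail} is in hand.
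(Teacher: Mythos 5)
Your proof is correct and follows essentially the same route as the paper: you pick the $\simeq$-representative $\cP^\star$ of a given $\cP\in\cA$ and observe, via Lemma~\ref{lem:euleriantrail}, that the existence of a red-blue Eulerian trail in $\aux_H(\cP)\uplus\cM$ depends on $\aux_H(\cP)$ only through its degree sequence and its partition into connected components, both of which $\simeq$ preserves. Your extra care about isolated vertices (phrasing the connectivity condition as ``all edges lie in a single component'' rather than the paper's ``the union is connected'') is a slight sharpening but does not change the argument.
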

	\begin{proof}
		Let $\cP\in \cA$ and $\cM$ be a multigraph whose edges are colored blue such that $\aux_H(\cP)\uplus\cM$ admits a red-blue  Eulerian trail.
		By definition, $\reduce_H(\cA)$ contains a partial solution $\cP^\star$ such that $\cP\simeq \cP^\star$.
		As $\aux_H(\cP)\uplus\cM$ contains a red-blue Eulerian trail, by Lemma~\ref{lem:euleriantrail}, we have that 
		\begin{itemize}
		\item $\aux_H(\cP)\uplus\cM$ is connected and 
		\item for every $i\in[k]$,  $\deg_{\aux_H(\cP)}(v_i)=\deg_{\cM}(v_i)$.
                \end{itemize}
		Since $\aux_H(\cP)$ has the same set of connected components as $\aux_H(\cP^\star)$, the multigraph $\aux_H(\cP^\star)\uplus\cM$ is also connected.
		Moreover, for every $i\in[k]$, we have 
		\[\deg_{\aux_H(\cP)}(v_i)=\deg_{\aux_H(\cP^\star)}(v_i)=\deg_{\cM}(v_i).\]
		By Lemma \ref{lem:euleriantrail}, we conclude that $\aux_H(\cP^\star)\uplus\cM$ admits a red-blue Eulerian trail.
		
		Thus, for every $\cP\in\cA$ and multigraph $\cM$ with blue edges such that $\aux_H(\cP)\uplus\cM$ admits a red-blue  Eulerian trail, there exists $\cP^\star\in \reduce_H(\cA)$ such that $\aux_H(\cP^\star)\uplus\cM$ admits a red-blue  Eulerian trail. Hence, we have $\reduce_H(\cA)\lesssim_H \cA$.	
	\end{proof}

	\begin{lemma}\label{lem:reduce}
		Let $\cA, \cB\subseteq \Pi(H)$. 
		If $\cA\lesssim_H \cB$, then $\reduce_H(\cA)\lesssim_H \cB$.
	\end{lemma}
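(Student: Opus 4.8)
The plan is to obtain the statement by composing the hypothesis $\cA\lesssim_H\cB$ with Lemma~\ref{lem:reduce1} through the transitivity of $\lesssim_H$. So first I would record the (essentially immediate) observation that $\lesssim_H$ is a transitive relation on subsets of $\Pi(H)$. Concretely, suppose $\mathcal{X}\lesssim_H\mathcal{Y}$ and $\mathcal{Y}\lesssim_H\mathcal{Z}$, and fix an arbitrary multigraph $\cM$ whose edges are colored blue together with some $\cP\in\mathcal{Z}$ for which $\aux_H(\cP)\uplus\cM$ admits a red-blue Eulerian trail. Applying the definition of $\mathcal{Y}\lesssim_H\mathcal{Z}$ to this very $\cM$ yields some $\cP'\in\mathcal{Y}$ with $\aux_H(\cP')\uplus\cM$ admitting a red-blue Eulerian trail; applying the definition of $\mathcal{X}\lesssim_H\mathcal{Y}$ to the \emph{same} $\cM$ then yields some $\cP''\in\mathcal{X}$ with $\aux_H(\cP'')\uplus\cM$ admitting a red-blue Eulerian trail. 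As $\cM$ was arbitrary, this is exactly the statement $\mathcal{X}\lesssim_H\mathcal{Z}$, establishing transitivity.

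With transitivity available, the lemma follows in one step: Lemma~\ref{lem:reduce1} gives $\reduce_H(\cA)\lesssim_H\cA$, the hypothesis gives $\cA\lesssim_H\cB$, and transitivity combines these into $\reduce_H(\cA)\lesssim_H\cB$.

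I do not expect a genuine obstacle in this argument; the only point requiring a little care is that the \emph{same} blue multigraph $\cM$ must be threaded through both invocations of the definition, which is precisely what the quantifier structure of Definition~\ref{def:bigequiv} (the universal quantifier over $\cM$ sitting outside the two existentials) permits. The substantive content sits in Lemma~\ref{lem:reduce1}, namely that replacing $\cA$ by a system of representatives $\reduce_H(\cA)$ preserves, for every $\cM$, the existence of a partial solution completing $\cM$ to a red-blue Eulerian trail; the present lemma merely propagates that fact along a chain of $\lesssim_H$-comparisons.
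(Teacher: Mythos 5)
Your proof is correct and follows essentially the same route as the paper: the paper's proof likewise just notes that $\lesssim_H$ is transitive and composes the hypothesis with Lemma~\ref{lem:reduce1}. You merely spell out the transitivity check (threading the same $\cM$ through both existentials), which the paper leaves to the reader.
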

	\begin{proof}	
          One easily checks that $\lesssim_H$ is a transitive relation. Now, 
		assuming that $\cA\lesssim_H\cB$, we have $\reduce_H(\cA)\lesssim \cB$ because $\reduce_H(\cA)\lesssim_H \cA$ by Lemma \ref{lem:reduce1}.	
	\end{proof}

	\section{Hamiltonian Cycle problem}\label{sec:algo}
	
	In this section, we present our algorithm solving \textsc{Hamiltonian Cycle}.
	Our algorithm computes recursively, for every $k$-labeled graph $H$ arising in the $k$-expression of $G$, a set $\cA_H$ such that $\cA_H\lesssim_H\Pi(H)$ and $|\cA_H|\leq n^{k}\cdot 2^{k (\log_2(k)+ 1)}$.
	In order to prove the correctness of our algorithm, we need the following lemmas which prove that the operations we use to compute sets of partial solutions preserve the relation $\lesssim_H$.
	
	\begin{lemma}\label{lem:relabeling}
		Let $H=\rho_{i\rightarrow j}(D)$. 
		If $\cA_D\lesssim_D \Pi(D)$, then $\cA_D\lesssim_H \Pi(H)$.
	\end{lemma}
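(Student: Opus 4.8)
The plan is to exploit that $\rho_{i\rightarrow j}$ only relabels vertices and adds no edges, so $E(H)=E(D)$ and hence $\Pi(H)=\Pi(D)$ as sets of edge-subsets; in particular $\cA_D$ is already a subset of $\Pi(H)$, and $H_{|\cP}=D_{|\cP}$ for every $\cP$. I would then introduce the \emph{merge} operation $\mu_{i\rightarrow j}$ on multigraphs with vertex set $\{v_1,\dots,v_k\}$, which re-routes to $v_j$ every edge-endpoint sitting at $v_i$ (so an edge $v_iv_a$ becomes $v_jv_a$, each $v_iv_j$-edge and each $v_i$-loop becomes a $v_j$-loop, and $v_i$ is left isolated). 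Directly from the definition of the auxiliary multigraph, since the only label that differs between $D$ and $H$ is the renaming $i\mapsto j$, one gets $\aux_H(\cP)=\mu_{i\rightarrow j}(\aux_D(\cP))$ for every $\cP\in\Pi(H)$. A first easy observation, used at the very end, is that $\mu_{i\rightarrow j}$ preserves red-blue Eulerian trails: applying it to every edge of such a trail keeps the walk closed and alternating, keeps consecutive edges incident (a shared $v_i$ becomes a shared $v_j$), and still covers all edges, so it is a red-blue Eulerian trail of the merged multigraph.

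The core of the argument is the reverse, \emph{un-merging}, step. Fix a blue multigraph $\cM$ and some $\cP_1\in\Pi(H)$ such that $\aux_H(\cP_1)\uplus\cM$ has a red-blue Eulerian trail $T$. Since no vertex of $H$ carries label $i$, the vertex $v_i$ is isolated in $\aux_H(\cP_1)$, so Lemma~\ref{lem:euleriantrail} forces $\bdeg_\cM(v_i)=0$, \ie $\cM$ has no edge at $v_i$. Writing $T$ cyclically as $e_1,f_1,\dots,e_\ell,f_\ell$ with the $e_x$ red and the $f_x$ blue, I would replace each red $e_x$ by the edge $\tilde e_x$ of $\aux_D(\cP_1)$ corresponding to the same maximal path of $H_{|\cP_1}=D_{|\cP_1}$; this restores some endpoints from $v_j$ to $v_i$, and as $x$ ranges over $[\ell]$ the $\tilde e_x$ run exactly through $E(\aux_D(\cP_1))$. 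To keep the sequence a trail I replace each blue $f_x$ by a blue edge $f'_x$ on the same vertices, except that an endpoint at $v_j$ is moved to $v_i$ precisely when the red edge it meets in $T$ now meets it there. Then $\tilde e_1,f'_1,\dots,\tilde e_\ell,f'_\ell$ is a closed red-blue trail covering every red edge of $\aux_D(\cP_1)$ and exactly the blue edges $\cM':=\{f'_1,\dots,f'_\ell\}$, \ie a red-blue Eulerian trail of $\aux_D(\cP_1)\uplus\cM'$; and since $\cM$ had no endpoint at $v_i$, pushing the moved endpoints back gives $\mu_{i\rightarrow j}(\cM')=\cM$.

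To conclude I would apply the hypothesis $\cA_D\lesssim_D\Pi(D)$ with the blue multigraph $\cM'$: as $\cP_1\in\Pi(D)$ and $\aux_D(\cP_1)\uplus\cM'$ admits a red-blue Eulerian trail, there is some $\cP_2\in\cA_D$ with $\aux_D(\cP_2)\uplus\cM'$ admitting one. Applying the merge observation of the first paragraph to $N:=\aux_D(\cP_2)\uplus\cM'$, and using $\aux_H(\cP_2)=\mu_{i\rightarrow j}(\aux_D(\cP_2))$ together with $\mu_{i\rightarrow j}(\cM')=\cM$, I get $\mu_{i\rightarrow j}(N)=\aux_H(\cP_2)\uplus\cM$, which therefore has a red-blue Eulerian trail. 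Since $\cP_2\in\cA_D$ and $\cM$ was arbitrary, this is exactly $\cA_D\lesssim_H\Pi(H)$. The one delicate point --- the main obstacle --- is the bookkeeping in the un-merging step when $\tilde e_x$ is a genuine $v_iv_j$-edge of $D$ (so $e_x$ is a $v_j$-loop of $H$): there one must read off from the entry/exit incidences of $T$ which of the two adjacent blue endpoints is sent to $v_i$ and which stays at $v_j$, and verify that this choice simultaneously keeps the walk a trail and recovers $\cM$ under $\mu_{i\rightarrow j}$. Everything else is a routine transcription between $D$- and $H$-labels.
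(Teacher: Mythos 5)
Your proposal is correct and follows essentially the same route as the paper: both proofs observe $\Pi(H)=\Pi(D)$, ``un-merge'' the given red-blue Eulerian trail of $\aux_H(\cP)\uplus\cM$ into one of $\aux_D(\cP)\uplus\cM'$ by restoring the $v_i$-endpoints of the red edges and re-routing the adjacent blue edges accordingly, apply the hypothesis $\cA_D\lesssim_D\Pi(D)$ to $\cM'$, and then merge $v_i$ back into $v_j$ to return to $\cM$. Your explicit $\mu_{i\rightarrow j}$ operation and the flagged bookkeeping for $v_iv_j$-edges becoming $v_j$-loops are just a slightly more careful packaging of the same argument.
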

	\begin{proof}
		First, observe that $H$ has the same set of vertices and edges as $D$. Thus, we have $\Pi(H)=\Pi(D)$ and $\overline{\Pi}(H)=\overline{\Pi}(D)$.
		Suppose that $\cA_D\lesssim_D \Pi(D)$.
		
		Let $\cP\in \Pi(H)$ and $\cM$ be a multigraph whose edges are colored blue such that $\aux_H(\cP)\uplus\cM$ contains a red-blue Eulerian trail $T$. 
		To prove the lemma, it is sufficient to prove that there exists $\cP^\star\in\cA_D$ such that $\aux_H(\cP^\star)\uplus\cM$ contains a red-blue Eulerian trail.

		Let $f$ be a bijective function such that 
		\begin{itemize}
		\item for every edge $e$ of $\aux_D(\cP)$ with endpoints $v_\ell$ and $v_i$, for some $\ell$, $f(e)$ is an edge of $\aux_H(\cP)$ with endpoints $v_\ell$ and $v_j$, and 
		\item for every loop $e$ with endpoint $v_i$, $f(e)$ is a loop of $\aux_H(\cP)$ with endpoint $v_j$. 
		\end{itemize}
		By construction of $\aux_D(\cP)$ and $\aux_H(\cP)$, such a function exists.
		
		We construct the multigraph $\cM'$ from $\cM$ and $T$ by successively doing the following: 
		\begin{itemize}
			\item For every edge $e$ of $\aux_D(\cP)$ with endpoints $v_\ell$ and $v_i$, take the subwalk $W=(v_\ell,f(e),v_j,e_a,v_a)$ of $T$. Replace $e_a$ in $\cM$ by an edge $e_a'$ with endpoints $v_i$ and $v_a$. 
			\item For every loop $e$ with endpoint $v_i$ in $\aux_D(\cP)$, take the subwalk $W=(v_a,e_a,v_j,f(e),v_j,e_b,v_b)$ of $T$. Replace $e_a$ (respectively $e_b$) in $\cM$ by an edge $e_a'$ (resp. $e_b'$) with endpoints $v_a$ and $v_i$ (resp. $v_i$ and $v_b$).
		\end{itemize}
		
		By construction, one can construct from $T$ a red-blue Eulerian trail of $\aux_D(\cP)\uplus\cM'$.
		Since $\cA_D\lesssim_D \Pi(D)$, there exists $\cP^\star\in\cA_D$ such that $\aux_D(\cP^\star)\uplus\cM'$ contains a red-blue Eulerian trail.
		Observe that $\aux_H(\cP)$ (respectively $\cM$) is obtained from $\aux_D(\cP^\star)$ (resp. $\cM'$) by replacing each edge associated with $\{v_i,v_k\}$ or $\{v_i\}$ in $\aux_D(\cP^\star)$ (resp. $\cM'$) with an edge associated with $\{v_j,v_k\}$ or $\{v_j\}$ respectively. 
		We conclude that $\aux_H(\cP^\star)\uplus\cM$ admits a red-blue Eulerian trail.
	\end{proof}
	
	\begin{lemma}\label{lem:disjointunion}
		Let $H=D\oplus F$.
		If $\cA_{D}\lesssim_D \Pi(D)$ and $\cA_{F}\lesssim_F \Pi(F)$, then $(\cA_{D}\otimes \cA_{F})\lesssim_H \Pi(H)$.
	\end{lemma}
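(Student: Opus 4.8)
The plan is to reduce the statement for $H = D \oplus F$ to the two hypotheses $\cA_D \lesssim_D \Pi(D)$ and $\cA_F \lesssim_F \Pi(F)$ by replacing the $D$-part and the $F$-part of a witness one at a time. First I would record two structural facts about disjoint union. Since $D$ and $F$ share no vertices and $H$ has no edges between $V(D)$ and $V(F)$, every $\cP_1 \in \Pi(H)$ splits uniquely as $\cP_1 = \cP_D \cup \cP_F$ with $\cP_D = \cP_1 \cap E(D) \in \Pi(D)$ and $\cP_F = \cP_1 \cap E(F) \in \Pi(F)$; in particular $\Pi(H) = \Pi(D) \otimes \Pi(F)$. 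Moreover, the maximal paths of $H_{|\cP_1}$ are exactly those of $D_{|\cP_D}$ together with those of $F_{|\cP_F}$ (isolated vertices included), and labels are preserved by $\oplus$, so the red auxiliary multigraphs combine as $\aux_H(\cP_1) = \aux_D(\cP_D) \uplus \aux_F(\cP_F)$. Thus, fixing a blue multigraph $\cM$ and a witness $\cP_1 = \cP_D \cup \cP_F$ with a red-blue Eulerian trail in $\aux_D(\cP_D) \uplus \aux_F(\cP_F) \uplus \cM$, the goal becomes to produce $\cP_D^\star \in \cA_D$ and $\cP_F^\star \in \cA_F$ keeping a red-blue Eulerian trail of $\aux_D(\cP_D^\star) \uplus \aux_F(\cP_F^\star) \uplus \cM$, since then $\cP_D^\star \cup \cP_F^\star \in \cA_D \otimes \cA_F$ is the required witness for Definition \ref{def:bigequiv}.

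The central device is a contraction/expansion between trails that hides the $F$-side inside a purely blue multigraph, so that $\lesssim_D$ becomes applicable. Given the red-blue Eulerian trail $T$ of $\aux_D(\cP_D) \uplus \aux_F(\cP_F) \uplus \cM$, I single out the edges of $\aux_D(\cP_D)$ (the ``$D$-red'' edges). Because $T$ alternates colours, each $D$-red edge is flanked by blue edges of $\cM$, and the portion of $T$ between two cyclically consecutive $D$-red edges is a walk through $\aux_F(\cP_F) \uplus \cM$ that begins and ends with a blue edge and alternates blue/$F$-red in between. I would contract each such \emph{gap walk} to a single new blue edge joining its two endpoints; this yields a blue multigraph $\cM'$, and the contracted $T$ is a red-blue Eulerian trail of $\aux_D(\cP_D) \uplus \cM'$. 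Note there is always at least one $D$-red edge, since $V(D) \neq \emptyset$ forces at least one maximal path in $D_{|\cP_D}$, so the gaps are well defined; and the gap walks partition $E(\aux_F(\cP_F)) \cup E(\cM)$.

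Now I would apply $\cA_D \lesssim_D \Pi(D)$ to the blue multigraph $\cM'$ and the witness $\cP_D \in \Pi(D)$: it returns $\cP_D^\star \in \cA_D$ together with a red-blue Eulerian trail $T^\star$ of $\aux_D(\cP_D^\star) \uplus \cM'$. To climb back to $H$, I expand $T^\star$: each edge of $\cM'$ occurs exactly once in $T^\star$, and I replace it (reversing the gap walk if needed to match orientation) by the gap walk it came from. Since every gap walk is a blue-to-blue alternating walk between the correct endpoints, each splice respects the colour alternation, and since the gap walks partition the $F$-side edges, the result uses every edge of $\aux_D(\cP_D^\star) \uplus \aux_F(\cP_F) \uplus \cM$ exactly once. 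Hence this multigraph has a red-blue Eulerian trail.

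Finally I would run the symmetric step with the roles of $D$ and $F$ exchanged: starting from the trail of $\aux_D(\cP_D^\star) \uplus \aux_F(\cP_F) \uplus \cM$, contract the gaps between consecutive $F$-red edges into a blue multigraph $\cM''$, apply $\cA_F \lesssim_F \Pi(F)$ to get $\cP_F^\star \in \cA_F$ with a red-blue Eulerian trail of $\aux_F(\cP_F^\star) \uplus \cM''$, and expand to obtain a red-blue Eulerian trail of $\aux_D(\cP_D^\star) \uplus \aux_F(\cP_F^\star) \uplus \cM$. Since $\aux_H(\cP_D^\star \cup \cP_F^\star) = \aux_D(\cP_D^\star) \uplus \aux_F(\cP_F^\star)$ and $\cP_D^\star \cup \cP_F^\star \in \cA_D \otimes \cA_F$, this is exactly the witness required, proving $(\cA_D \otimes \cA_F) \lesssim_H \Pi(H)$. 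The step I expect to be the crux is the contraction/expansion: making precise that each gap walk is a blue-to-blue alternating walk (so it can be spliced in place of a single blue edge without breaking alternation) and that these walks partition the $F$-side, so that the expansion of the new trail $T^\star$ is well defined no matter in which order $T^\star$ happens to traverse the edges of $\cM'$.
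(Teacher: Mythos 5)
Your proof is correct and follows essentially the same route as the paper: both reduce to the hypotheses on $D$ and $F$ one at a time by contracting the maximal alternating sub-walks of the Eulerian trail that avoid the $D$-red (resp.\ $F$-red) edges into single blue edges, applying $\lesssim_D$ (resp.\ $\lesssim_F$) to the resulting blue multigraph, and re-expanding. Your write-up is in fact slightly more careful than the paper's on the splice details (parity of the gap walks, possible reversal, and the identity $\aux_H(\cP_D\cup\cP_F)=\aux_D(\cP_D)\uplus\aux_F(\cP_F)$).
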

	\begin{proof}
		Observe that $V(D)$ and $V(F)$ are disjoint.
		Consequently, we have $\Pi(H)=\Pi(D)\otimes \Pi(F)$, and, for all $\cP_D\in \Pi(D)$ and $\cP_F\in\Pi(F)$, we have $\aux_H(\cP_D\cup\cP_F)=\aux_H(\cP_D)\uplus\aux_H(\cP_F)$.
		Suppose that $\cA_{D}\lesssim_D \Pi(D)$ and $\cA_{F}\lesssim_F \Pi(F)$.
		
		Let $\cP_D\in\Pi(D)$ and $\cP_F\in\Pi(F)$, and let $\cM$ be a multigraph whose edges are colored blue such that there exists a red-blue Eulerian trail $T$ in $\aux_H(\cP_D\cup \cP_F)\uplus\cM$. 
		It is sufficient to prove that there exist $\cP_D^\star\in\cA_D$ and $\cP_F^\star\in\cA_F$ such that $\aux_H(\cP_D^\star\cup\cP_F^\star)\uplus\cM$ admits a red-blue Eulerian trail.
		
		We begin by proving that there exists $\cP_D^\star\in\cA_D$ such that $\aux_H(\cP_D^\star\cup \cP_F)\uplus\cM$ contains a red-blue Eulerian trail.
		In order to do so, we construct from $\aux_H(\cP_D\cup\cP_F)\uplus\cM$ a multigraph $\cM'$ by successively repeating the following: take a maximal sub-walk $W$ of $T$ which uses alternately blue edges and red edges from $\aux_H(\cP_F)\uplus \cM$, remove these edges and add a blue edge between the two end-vertices of $W$.
		
		By construction of $\cM'$, for every $\cP_D'\in\Pi(D)$, if $\aux_D(\cP_D')\uplus\cM'$ admits a red-blue Eulerian trail, then $\aux_H(\cP_D'\cup\cP_F)\uplus\cM$ contains a red-blue Eulerian trail also.
		We also deduce from this construction that the multigraph $\aux_D(\cP_D)\uplus\cM'$ contains a red-blue Eulerian trail.
		As $\cA_D\lesssim_D \Pi(D)$, there exists $\cP_D^\star$ such that $\aux_D(\cP_D^\star)\uplus\cM'$ contains a red-blue Eulerian trail.
		We conclude that $\aux_H(\cP_D^\star\cup\cP_F)\uplus\cM$ contains also a red-blue Eulerian trail.
		
		Symmetrically, we can prove that there exists $\cP_F^\star\in\cA_F$ such that $\aux_H(\cP_D^\star\cup\cP_F^\star)\uplus\cM$ contains a red-blue Eulerian trail. This proves the lemma.
	\end{proof}

	For two $k$-labeled subgraphs $H$ and $D$ arising in the $k$-expression of $G$ such that $H=\eta_{i,j}(D)$, we denote by $\cE^H_{i,j}$ the set of edges whose endpoints are labeled $i$ and $j$, \ie,  $\{ uv\in E(H) \mid  \lab_H(v)=i\ \wedge\ \lab_H(v)=j \}$.
	For $\cP\in \Pi(H)$, we denote by $\cP+(i,j)$ the set of all partial solutions $\cP'$ of $\Pi(H)$ obtained by the union of $\cP$ and an edge $uv$ in $\cE^H_{i,j}$. Observe that $u$ and $v$ must be the endpoints of two distinct maximal paths of $H_{\mid\cP}$.
	We extend this notation to sets of partial solutions; for every $\cA\subseteq \Pi(H)$, we denote by $\cA+(i,j)$, the set $\bigcup_{\cP\in\cA}(\cP+(i,j))$. It is worth noting that $\Pi(D)\subseteq \Pi(H)$ and thus the operator $+(i,j)$ is well defined for a partial solution in $\Pi(D)$.
	
	Moreover, for every $\cA\subseteq \Pi(D)$ and integer $t\geq 0$,  we define the set $\cA^t$ as follows
	\[ \cA^t:=\begin{cases}
	\cA & \text{ if $t=0$,}\\
	\reduce_H(\cA^{t-1}+(i,j)) & \text{ otherwise.}
	\end{cases} \]
	Observe that each set $\cP$ in $\cA^t$ is a partial solution of $H$ and $|\cP\cap \cE_{i,j}^H|=t$.
	
	\begin{lemma}\label{lem:addeges}
		Let $H=\eta_{i,j}(D)$ such that $E(D)\cap \cE^H_{i,j}=\emptyset$.
		If $\cA_D\lesssim_D \Pi(D)$, then we have $\cA_D^0\cup\dots\cup\cA_D^{n}\lesssim_H \Pi(H)$.
	\end{lemma}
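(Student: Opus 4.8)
The plan is to stratify $\Pi(H)$ by the number of $\cE^H_{i,j}$-edges used and to propagate $\lesssim_H$ along the construction $\cA_D^0,\cA_D^1,\dots$. Set $\cC^0:=\Pi(D)$ and $\cC^t:=\cC^{t-1}+(i,j)$. First I would check that $\Pi(H)=\cC^0\cup\cC^1\cup\dots\cup\cC^n$. Since $E(D)\cap\cE^H_{i,j}=\emptyset$ we have $E(H)=E(D)\sqcup\cE^H_{i,j}$, so any $\cP\in\Pi(H)$ splits as $\cP_D:=\cP\cap E(D)\in\Pi(D)$ together with $t:=|\cP\cap\cE^H_{i,j}|\leq|V(H)|-1\leq n$ edges of $\cE^H_{i,j}$. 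Adding these edges back to $\cP_D$ in any order yields a chain of partial solutions in which each step is a legal $+(i,j)$-operation: every intermediate edge set is a subset of $\cP$, hence $H$ restricted to it is a union of vertex-disjoint paths, and each newly added edge joins the endpoints of two distinct maximal paths (it creates no cycle, being a subset of the acyclic $H_{|\cP}$). Thus $\cP\in\cC^t$, while conversely $\cC^t\subseteq\Pi(H)$. Observing further that $\lesssim_H$ is preserved under unions (if $\cX_t\lesssim_H\cY_t$ for all $t$ then $\bigcup_t\cX_t\lesssim_H\bigcup_t\cY_t$, immediate from Definition~\ref{def:bigequiv}), it suffices to prove $\cA_D^t\lesssim_H\cC^t$ for every $t\in\{0,\dots,n\}$.

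For the base case $t=0$ I would use that $\eta_{i,j}$ changes no labels and adds no vertices, and that a partial solution $\cP\in\Pi(D)$ uses no $\cE^H_{i,j}$-edge; hence $H_{|\cP}=D_{|\cP}$ and $\lab_H=\lab_D$, so $\aux_H(\cP)=\aux_D(\cP)$. As $\lesssim_D$ and $\lesssim_H$ are defined purely through these auxiliary multigraphs, they coincide on subsets of $\Pi(D)$, and the hypothesis $\cA_D\lesssim_D\Pi(D)$ gives $\cA_D^0=\cA_D\lesssim_H\Pi(D)=\cC^0$.

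The inductive step rests on a monotonicity claim: \emph{if $\cA\lesssim_H\cB$ then $\cA+(i,j)\lesssim_H\cB+(i,j)$}. Granting it, combine it with Lemma~\ref{lem:reduce1} and the transitivity of $\lesssim_H$ (established in Lemma~\ref{lem:reduce}) to obtain, assuming $\cA_D^{t-1}\lesssim_H\cC^{t-1}$,
\[ \cA_D^t=\reduce_H(\cA_D^{t-1}+(i,j))\lesssim_H\cA_D^{t-1}+(i,j)\lesssim_H\cC^{t-1}+(i,j)=\cC^t, \]
finishing the induction. To prove the claim I would exploit that, in an auxiliary multigraph, adding one $(i,j)$-edge to a partial solution merges the red edge $\{v_a,v_i\}$ encoding the path $P_u$ ending at a label-$i$ vertex and the red edge $\{v_j,v_b\}$ encoding the path $P_v$ ending at a label-$j$ vertex into a single red edge $\{v_a,v_b\}$. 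The reversible local move on red-blue Eulerian trails is: trade the single red edge $\{v_a,v_b\}$ for a red edge $\{v_a,v_i\}$, a fresh blue edge $g=\{v_i,v_j\}$, and a red edge $\{v_j,v_b\}$ traversed in this order; since both the old edge and the new segment are flanked by blue edges, alternation is preserved. Concretely, given a blue multigraph $\cM$ and $\cP_1'=\cP_1\cup\{f\}\in\cB+(i,j)$ with a red-blue Eulerian trail of $\aux_H(\cP_1')\uplus\cM$, inserting $g$ at the occurrence of the merged red edge produces such a trail for $\aux_H(\cP_1)\uplus(\cM\cup\{g\})$; applying $\cA\lesssim_H\cB$ to the blue multigraph $\cM\cup\{g\}$ yields $\cP_2\in\cA$ with a red-blue Eulerian trail $T''$ of $\aux_H(\cP_2)\uplus(\cM\cup\{g\})$; reading off the two red edges flanking $g$ in $T''$ identifies endpoints $u^\star$ (labeled $i$) and $v^\star$ (labeled $j$) of maximal paths of $H_{|\cP_2}$, I set $\cP_2':=\cP_2\cup\{u^\star v^\star\}\in\cA+(i,j)$, and contracting the segment flanking $g$ back to one red edge turns $T''$ into a red-blue Eulerian trail of $\aux_H(\cP_2')\uplus\cM$.

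The delicate point, and the main obstacle, is the backward direction: I must guarantee that the two red edges flanking $g$ in $T''$ are \emph{distinct}, so that $u^\star,v^\star$ lie on distinct maximal paths and $\cP_2'$ is a genuine element of $\cP_2+(i,j)$ (joining the two ends of one path would create a cycle). Here I would use that a closed red-blue Eulerian trail has equally many red and blue edges. One may assume $\cM\neq\emptyset$, since otherwise $\aux_H(\cP_1')\uplus\cM$ has only red edges and admits no red-blue Eulerian trail unless $V(H)=\emptyset$, a case in which the lemma is trivial. Then $\cM\cup\{g\}$ has at least two blue edges, so $T''$ has at least four edges, whence the unique occurrence of $g$ in the cyclic trail has distinct predecessor and successor. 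The remaining verifications — that the insert/contract moves preserve alternation even when $a=i$, $b=j$, or $a=b$ makes some edge a loop, and that the $\aux_H$-level picture matches the path-merging in $H_{|\cP}$ — are routine from the definition of $\aux_H$ together with Lemma~\ref{lem:cwproperties}.
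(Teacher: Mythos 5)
Your proof is correct and follows essentially the same route as the paper's: stratify $\Pi(H)$ by the number of $\cE^H_{i,j}$-edges used, establish the monotonicity claim $\cA\lesssim_H\cB\Rightarrow\cA+(i,j)\lesssim_H\cB+(i,j)$ via the insert-a-blue-$\{v_i,v_j\}$-edge trick (in both directions), and induct using Lemma~\ref{lem:reduce1} and transitivity. Your explicit check that the two red edges flanking the new blue edge are distinct (so that the two maximal paths being joined really are distinct) is a small extra precaution that the paper's proof leaves implicit.
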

	\begin{proof}
	Suppose that $\cA_D\lesssim_D \Pi(D)$.
	We begin by proving the following claim.
	\begin{claim}\label{claim:addeges}
		Let $\cA,\cB\subseteq \Pi(H)$.
		If $\cA\lesssim_H\cB$, then $\cA+(i,j)\lesssim_H \cB+(i,j)$.
	\end{claim}
	\begin{proof}
		Suppose that $\cA\lesssim_H \cB$.
		Let $\cP\in \cB+(i,j)$ and $\cM$ be a multigraph with blue edges such that $\aux_H(\cP)\uplus\cM$ admits a red-blue Eulerian trail.
		Take $xy\in \cE_{i,j}^H$ such that $\cP':=\cP- xy$ belongs to $\cB$ and $x\in \lab_H^{-1}(i)$ and $y\in \lab_H^{-1}(j)$.
		Let $\cM'$ be the multigraph obtained by adding a blue edge $f$ with endpoints $v_i$ and $v_j$ to $\cM$.
		
		We claim that the multigraph $\aux_H(\cP')\uplus\cM'$ admits a red-blue Eulerian trail.
		Note that there is a path $P\in \cP$ containing the edge $xy$, 
		and when we remove $xy$ from $\cP$, we divide $P$ into two maximal subpaths, say $P_1$ and $P_2$.
		Without loss of generality, we may assume that $P_1$ contains $x$ and $P_2$ contains $y$.
		Let $x'$ and $y'$ be the other end-vertices of $P_1$ and $P_2$, respectively, 
		and let $i':=\lab_H(x')$ and $j':=\lab_H(y')$.
		Note that $\aux_H(\cP')$ can be obtained from $\aux_H(\cP)$ by removing an edge $e$ associated with $\{v_{i'},v_{j'}\}$ and adding two edges $e_1$ and $e_2$ associated with $\{v_{i'},v_{i}\}$ and $\{v_j,v_{j'}\}$ respectively.
		So, we can obtain a red-blue Eulerian trail of $\aux_H(\cP')\uplus\cM'$
		from a red-blue Eulerian trail of $\aux_H(\cP)\uplus \cM$ 
		by replacing $(v_{i'},e,v_{j'})$ with the sequence $(v_{i'}, e_1, v_i, f, v_j, e_2, v_{j'})$ 
		where $f$ is the blue edge we add to $\cM$ to obtain $\cM'$.
		It implies the claim.

		Now, since $\cA\lesssim_H \cB$, there exists $\cP^\star\in\cA$ such that $\aux_H(\cP^\star)\uplus\cM'$ admits a red-blue Eulerian trail $T$.
		Let $W$ be the subwalk of $T$ such that $W=(v_a,e_a,v_i,f,v_j,e_b,v_b)$. 
		Take two maximal paths $P_1$ and $P_2$ in $H_{|\cP^\star}$ such that the end-vertices of $P_1$ (respectively $P_2$) are labeled $a$ and $i$ (resp. $b$ and $j$).
		Let $\widehat{\cP}$ be the partial solution of $H$ obtained from $\cP^\star$ by adding the edge in $\cE_{i,j}^H$ between the end-vertex of $P_1$ labeled $i$ and the end-vertex of $P_2$ labeled $j$.
		By construction, we have $\widehat{\cP}\in \cA+ (i,j)$ and $\aux_H(\widehat{\cP})\uplus\cM$ admits a red-blue Eulerian trail.
		We conclude that $\cA+(i,j)\lesssim_H \cB+(i,j)$.
	\end{proof}
	Let $\Pi(D)+t(i,j)$ be the set of partial solutions of $H$ obtained by applying $t$ times the operation $+(i,j)$ on $\Pi(D)$.
	Since every partial solution of $H$ is obtained from the union of a partial solution of $D$ and a subset of $\cE_{i,j}^H$ of size at most $n$, we have $\Pi(H)=\bigcup_{0\leq t \leq n} (\Pi(D)+t(i,j))$.

	Since $V(D)=V(H)$ and $E(D)\subseteq E(H)$, we have $\cA_D^0=\cA_D\lesssim_H \Pi(D)+0(i,j)$.
	Let $\ell \in \{ 1,\dots,n\}$ and suppose that $\cA_D^{\ell-1}\lesssim_H \Pi(D)+(\ell-1)(i,j)$.
	From Claim \ref{claim:addeges}, we have $\cA_D^{\ell-1}+(i,j)\lesssim_H \Pi(D)+\ell(i,j)$.
	By Lemma \ref{lem:reduce}, we deduce that $\cA_D^{\ell}=\reduce(\cA_D^{\ell-1} +(i,j))\lesssim_H \Pi(D)+\ell(i,j)$.
	
	Thus, by recurrence, for every $0\leq\ell\leq n$, we have $\cA_D^\ell \lesssim_H \Pi(D)+\ell(i,j)$.
	We conclude that $\cA_D^0\cup \dots\cup \cA_D^n\lesssim_H \Pi(H)$. 
	\end{proof}
	
	We are now ready to prove the main theorem of this paper.
	
	\begin{theorem}\label{thm:hamiltonian}
		There exists an algorithm that, given an $n$-vertex graph $G$ and a $k$-expression $\phi$ of $G$, solves \textsc{Hamiltonian Cycle} in time $ O(n^{2k+5}\cdot 2^{2k (\log_2(k)+ 1)} \cdot k^3 \cdot \log_2(n k))$.
	\end{theorem}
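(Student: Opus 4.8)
The plan is to assemble the main theorem from the machinery developed in Sections~\ref{sec:toolkit} and \ref{sec:algo} via a bottom-up dynamic programming traversal of the $k$-expression $\phi$. First I would fix an irredundant $k$-expression, which by Courcelle and Olariu can be obtained in linear time, so that Lemma~\ref{lem:cwproperties} applies to every $k$-labeled graph arising in $\phi$. The algorithm computes, for each subexpression $\phi'\in\Sub(\phi)$ with $H=\val(\phi')$, a set $\cA_H\subseteq\Pi(H)$ satisfying the two invariants $\cA_H\lesssim_H\Pi(H)$ and $|\cA_H|\leq n^{k}\cdot 2^{k(\log_2(k)+1)}$. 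The correctness is an induction on the structure of $\phi$, handled one operation at a time: for a leaf $i(x)$ we set $\cA_H=\{\emptyset\}$; for $H=\rho_{i\to j}(D)$ we set $\cA_H=\reduce_H(\cA_D)$ and invoke Lemma~\ref{lem:relabeling}; for $H=D\oplus F$ we set $\cA_H=\reduce_H(\cA_D\otimes\cA_F)$ and invoke Lemma~\ref{lem:disjointunion}; and for $H=\eta_{i,j}(D)$ we set $\cA_H=\reduce_H(\cA_D^0\cup\dots\cup\cA_D^n)$ and invoke Lemma~\ref{lem:addeges}. In every case Lemma~\ref{lem:reduce} guarantees that applying $\reduce_H$ preserves the relation $\lesssim_H$, so the first invariant is maintained, while Lemma~\ref{lem:nbequivalence} enforces the size bound.

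Next I would combine the local correctness steps into a single claim that at the root, with $H=G$ (and all vertices carrying arbitrary labels), $\cA_G\lesssim_G\Pi(G)$. By the remark following Definition~\ref{def:bigequiv}, together with Lemmas~\ref{lem:easyway} and \ref{lem:EulerianHamiltonian}, this relation implies that $G$ has a Hamiltonian cycle if and only if there exist $\cP^\star\in\cA_G$ and a complement solution $\cQ^\star$ forming one. For the final acceptance test I would observe that at the root $V(G)\setminus V(H)=\emptyset$, so the only admissible complement solution is $\cQ=\emptyset$, whose auxiliary multigraph has no edges; hence $\cP^\star$ must itself already be a single spanning red cycle, equivalently $\aux_G(\cP^\star)$ must consist of loops that together form a connected, degree-balanced multigraph. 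The cleanest formulation is that the algorithm answers \textbf{yes} precisely when some $\cP^\star\in\cA_G$ is a single maximal path whose two endpoints can be joined, i.e.\ $\aux_G(\cP^\star)\uplus\cM_0$ admits a red-blue Eulerian trail for the trivial blue multigraph $\cM_0$ encoding the closing edge; I would state this acceptance condition explicitly and verify it against the $\lesssim_G$ property.

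The running time analysis is then a matter of bookkeeping over the $O(n)$ disjoint-union operations and $O(nk^2)$ unary operations in $\phi$. The dominant cost comes from the $\eta_{i,j}$ step, where computing $\cA_D^0\cup\dots\cup\cA_D^n$ requires $n+1$ successive applications of $+(i,j)$ and $\reduce_H$. Each set has size at most $n^{k}\cdot 2^{k(\log_2(k)+1)}$, the operation $+(i,j)$ expands each partial solution into at most $O(k^2)$ (bounded by $O(n^2)$) new ones by choosing a pair of path-endpoints, and by Lemma~\ref{lem:nbequivalence} each $\reduce_H$ on a set $\cA$ costs $O(|\cA|\cdot nk^2\log_2(nk))$. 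Multiplying the per-reduce cost by the set sizes, by the $n+1$ iterations inside one $\eta_{i,j}$, and by the $O(nk^2)$ unary operations yields the claimed bound $O(n^{2k+5}\cdot 2^{2k(\log_2(k)+1)}\cdot k^3\cdot\log_2(nk))$; I would present this as a product of the four factors and simplify. The main obstacle I anticipate is not any single lemma but the careful accounting at the root: making precise what complement solution (if any) is allowed when $V(H)=V(G)$, and confirming that the degenerate blue multigraph used in the acceptance test is consistent with Definition~\ref{def:bigequiv}, so that $\lesssim_G$ actually transfers the existence of a Hamiltonian cycle from $\Pi(G)$ to the bounded-size representative set $\cA_G$.
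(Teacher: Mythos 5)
Your dynamic-programming skeleton, the four update rules, the two invariants $\cA_H\lesssim_H\Pi(H)$ and $|\cA_H|\leq n^{k}\cdot 2^{k(\log_2(k)+1)}$, and the running-time bookkeeping all match the paper's proof. The genuine gap is the acceptance test, and it is precisely the obstacle you flag at the end without resolving: the test cannot be performed at the root. At the root $H=G$ we have $E(G)\setminus E(H)=\emptyset$, so the only complement solution is $\cQ=\emptyset$; since every partial solution is by definition acyclic, no $\cP\in\Pi(G)$ forms a Hamiltonian cycle with $\emptyset$, and Lemmas~\ref{lem:easyway} and~\ref{lem:EulerianHamiltonian} yield no usable condition there. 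Your fallback --- accept when some $\cP^\star\in\cA_G$ is a single spanning path and $\aux_G(\cP^\star)\uplus\cM_0$ has a red-blue Eulerian trail for a ``trivial blue multigraph $\cM_0$ encoding the closing edge'' --- is not justified: the closing edge lies in $E(H)$, not in $E(G)\setminus E(H)$, so $\cM_0$ is not $\aux_H(\cQ)$ for any complement solution and Lemma~\ref{lem:EulerianHamiltonian} does not apply to it. Worse, at the root two vertices with the same label need not have the same neighborhood \emph{inside} $G$ (Lemma~\ref{lem:cwproperties} only controls neighborhoods outside $V(H)$, which is empty here); two Hamiltonian paths whose endpoints carry the same pair of labels are $\simeq$-equivalent, yet one may have adjacent endpoints and the other not, and $\reduce_G$ may discard the closable one. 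So your test is neither complete nor sound as stated.

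The paper resolves this by testing not at the root but at every subexpression $H=\eta_{i,j}(D)$ with $V(H)=V(G)$. If a Hamiltonian cycle $C$ exists, one takes the minimal $H$ arising in $\phi$ with $E(C)\subseteq E(H)$; minimality forces $H=\eta_{i,j}(D)$ with $\emptyset\neq E(C)\setminus E(D)\subseteq\cE^H_{i,j}$, so $E(C)\cap\cE^H_{i,j}$ is a complement solution of $D$ whose auxiliary multigraph consists only of blue edges between $v_i$ and $v_j$. The acceptance condition then becomes a check on the already-computed set $\cA_D$: there is $\cP\in\cA_D$ with $\deg_{\aux_D(\cP)}(v_\ell)=0$ for all $\ell\in[k]\setminus\{i,j\}$ and $\deg_{\aux_D(\cP)}(v_i)=\deg_{\aux_D(\cP)}(v_j)>0$. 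Here Lemma~\ref{lem:cwproperties}(2) guarantees that every $i$-labeled vertex is adjacent in $G$ to every $j$-labeled vertex, so the endpoint labels really do determine closability, and Lemmas~\ref{lem:euleriantrail} and~\ref{lem:EulerianHamiltonian} convert the degree condition back into an actual Hamiltonian cycle. You would need to replace your root test with this condition (checked at each such $\eta_{i,j}$ node) for the argument to go through; the rest of your proposal then coincides with the paper's proof.
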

	\begin{proof}
	 	Since every $k$-expression can be transformed into an irredundant $k$-expression in linear time, we may assume that $\phi$ is an irredundant $k$-expression.
	 	
	 	We do a bottom-up traversal of the $k$-expression and at each $k$-labeled graph $H$ arising in $\phi$, we compute a set $\cA_H\subseteq \Pi(H)$ such that $|\cA_H|\leq n^{k} 2^{k (\log(k)+ 1)}$ and $\cA_H\lesssim_H\Pi(H)$, by doing the following:
	 	\begin{itemize}
	 		\item If $H=i(v)$, then we have $\Pi(H)=\{\emptyset\}$ because $E(H)=\emptyset$. In this case, we set $\cA_H:=\{ \emptyset \}$. Obviously, we have $\cA_H\lesssim_H \Pi(H)$.
	 		\item If $H=\rho_{i,j}(D)$, then we set $\cA_H:=\cA_D$.
	 		\item If $H=D\oplus F$, then we set $\cA_H:=\reduce_H(\cA_D\otimes \cA_F)$. 
	 		\item If $H=\eta_{i,j}(D)$, then we set $\cA_H:=\reduce_H(\cA_D^0\cup \dots\cup \cA_D^n)$.
	 	\end{itemize}
		For the last three cases, we deduce, by induction, from Lemma \ref{lem:reduce} and Lemmas \ref{lem:relabeling}--\ref{lem:addeges} that $\cA_H\lesssim_H \Pi(H)$. Moreover by the use of the function $\reduce_H$, by Lemma \ref{lem:nbequivalence}, we have $|\cA_H|\leq n^{k}\cdot 2^{k (\log(k)+ 1)}$.
		
		\medskip
		We now explain how our algorithm decides whether $G$ admits a Hamiltonian cycle.
		We claim that $G$ has a Hamiltonian cycle if and only if there exist two $k$-labeled graphs $H$ and $D$ arising in $\phi$ with $V(H)=V(G)$ and $H=\eta_{i,j}(D)$, and there exists $\cP\in \cA_D$ such that, for every $\ell \in [k]\setminus\{i,j\}$, we have $\deg_{\aux_D(\cP)}(v_\ell)=0$ and $\deg_{\aux_H(\cP)}(v_i)=\deg_{\aux_H(\cP)}(v_j)>0$.

		First suppose that $G$ contains a Hamiltonian cycle $C$ and take the $k$-labeled graph $H$ arising in $\phi$ such that $E(C)\subseteq E(H)$ and $|E(H)|$ is minimal.
		Note that the operations of taking the disjoint union of two graphs or relabeling cannot create a Hamiltonian cycle. 
		Thus, by minimality, we have $H=\eta_{i,j}(D)$ such that 
		\begin{itemize}
		\item $D$ is a $k$-labeled graph arising in $\phi$ and $i,j\in [k]$, 
		\item $E(C)\not\subseteq  E(D)$.
		\end{itemize}
		It follows that $E(C)\setminus E(D)\subseteq \cE_{i,j}^H$.
		Let $\cP:=E(C)\cap E(D)$ and $\cQ:=E(C)\cap \cE_{i,j}^H$.
		Observe that $\cP\in \Pi(D)$ and $\cQ\in\overline{\Pi}(D)$.
		By Lemma \ref{lem:easyway}, the multigraph $\aux_D(\cP)\uplus\aux_D(\cQ)$ contains a red-blue Eulerian trail.
		Since $\cA_D\lesssim_D \Pi(D)$, there exists $\cP^\star\in\cA_D$ such that $\aux_D(\cP^\star)\uplus\aux_D(\cQ)$ contains a red-blue Eulerian trail.
		As $\cQ\subseteq \cE_{i,j}^H$, we deduce that, for every $\ell \in  [k]\setminus\{i,j\}$, we have $\deg_{\aux_D(\cP^\star)}(v_\ell)=0$ and $\deg_{\aux_H(\cP^\star)}(v_i)=\deg_{\aux_H(\cP^\star)}(v_j)$.
		
		For the other direction, suppose that the latter condition holds.
		Let $\cQ$ be the graph on the vertex set $V(G)$ such that it contains exactly $\deg_{\aux_H(\cP)}(v_i)$ many edges between 
		the set of vertices labeled $i$ and the set of vertices labeled $j$.
		Observe that $\aux_H(\cQ)$ consists of $\deg_{\aux_H(\cP)}(v_i)$ many edges between $v_i$ and $v_j$.
		Therefore, by Lemma~\ref{lem:euleriantrail}, 
		$\aux_H(\cP)\uplus \aux_H(\cQ)$ admits a red-blue Eulerian trail.
		Then, by Lemma~\ref{lem:EulerianHamiltonian}, 
		there exists $\cQ^\star\in \overline{\Pi}(H)$ such that 
		$\cP$ and $\cQ^\star$ form a Hamiltonian cycle, as required.

		\medskip
		\paragraph{\bf Running time} 
		Let $H$ be a $k$-labeled graph arising in $\phi$.
		Observe that if $H=i(v)$ or $H=\rho_{i\rightarrow j}(D)$, then we compute $\cA_H$ in time $O(1)$.
		
		By Lemma \ref{lem:nbequivalence}, for every $\cA\subseteq \Pi(H)$, we can compute $\reduce_H(\cA)$ in time $O(|\cA| \cdot n k^2 \log_2(nk))$.
		Observe that, for every $k$-labeled graph $D$ arising in $\phi$ and such that $\cA_D$ is computed before $\cA_H$, we have $|\cA_D|\leq n^{k} \cdot 2^{k (\log_2(k)+ 1)}$. It follows that:
		\begin{itemize}
			\item If $H=D\oplus F$, then we have $|\cA_D\otimes\cA_F| \leq n^{2k} \cdot 2^{2k (\log_2(k)+ 1)}$. Thus, we can compute $\cA_H:=\reduce_H(\cA_D\otimes\cA_F)$ in time \[ O(n^{2k+1} \cdot 2^{2k (\log_2(k)+ 1)} \cdot k^2 \log_2(n k)). \]
			
			\item If $H=\eta_{i,j}(D)$, then we can compute $\cA_H$ in time 
			\[ O(n^{k+4}\cdot 2^{k(\log_2(k)+1)} \cdot k^2  \log_2(n k)).  \]
			First observe that, for every partial solution $\cP$ of $H$, we have $|\cP+(i,j)|\leq n ^2$ and we can compute the set $\cP+(i,j)$ in time $O(n^2)$.
			Moreover, by Lemma \ref{lem:nbequivalence}, for every $\ell \in \{0,\dots,n-1\}$, we have $|\cA_D^{\ell}|\leq n^{k}\cdot 2^{k (\log_2(k)+ 1)}$ and thus, we deduce that $|\cA_D^{\ell}+(i,j)|\leq n^{k+2}\cdot 2^{k (\log_2(k)+ 1)}$ and that $\cA_D^{\ell+1}$ can be computed in time $O(n^{k+3}\cdot 2^{k (\log_2(k)+ 1)} \cdot k^2 \log_2(n k) )$.
			Thus, we can compute the sets $\cA_D^1,\dots,\cA_D^n$ in time $O(n^{k+4}\cdot 2^{k (\log(k)+ 1)} \cdot k^2  \log_2(n k) )$.
			The running time to compute $\cA_H$ from $\cA_D$ follows from $|\cA_D^0\cup \dots\cup \cA_D^n|\leq n^{k+1}\cdot 2^{2k (\log_2(k)+ 1)}$.
		\end{itemize}
	Since $\phi$ uses at most $O(n)$ disjoint union operations and $O(nk^2)$ unary operations, we deduce that the total running time of our algorithm is  \[ O(n^{2k+5}\cdot 2^{2k (\log_2(k)+ 1)} \cdot k^4 \log_2(n k)). \qedhere\]
		
	\end{proof}

	\section{Directed Hamiltonian cycle}\label{sec:dhc}
	
	In this section, we explain how to adapt our approach for directed graphs.
	 A \emph{$k$-labeled digraph} is a pair $(G,\lab_G)$ of a digraph $G$  and a function $\lab_G$ from $V(G)$ to $[k]$.
	Directed clique-width is also defined in \cite{CourcelleER93}, 
	and it is based on the four operations, where 
	the three operations of creating a digraph, taking the disjoint union of two labeled digraphs, and
	relabeling a digraph are the same, 
	and the operation of adding edges is replaced with the following:
	\begin{itemize}
	\item For a labeled digraph $G$ and distinct labels $i,j\in [k]$, add all the non-existent arcs from vertices with label $i$ to vertices with label $j$ (so we do not add arcs of both directions altogether).
	\end{itemize}
	A \emph{directed clique-width $k$-expression} and \emph{directed clique-width} are defined in the same way.	
	A \emph{directed Hamiltonian cycle} of a digraph $G$ is a directed cycle containing all the vertices of $G$.
	The \textsc{Directed Hamiltonian Cycle} problem asks, for a given digraph $G$, whether $G$ contains a directed Hamiltonian cycle or not.
	
	With a similar approach, we can show the following.
	
	\begin{theorem}
		There exists an algorithm that, given an $n$-vertex digraph $G$ and a directed clique-width $k$-expression of $G$, 
		solves \textsc{Directed Hamiltonian Cycle} in time $n^{O(k)}$.
	\end{theorem}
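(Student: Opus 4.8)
The plan is to transcribe the entire machinery of Sections~\ref{sec:path-col}--\ref{sec:algo}, replacing undirected objects by their directed counterparts and, crucially, replacing the symmetric red-blue Eulerian lemma (Lemma~\ref{lem:euleriantrail}) by a directed version. First I would redefine a \emph{partial solution} $\cP\subseteq E(H)$ as a set of arcs whose underlying subdigraph $H_{|\cP}$ is a disjoint union of directed paths (each vertex has in-degree and out-degree at most one, and no directed cycle is formed), and a \emph{complement solution} $\cQ\subseteq E(G)\setminus E(H)$ as arcs forming directed $H$-paths. The auxiliary object $\aux_H(\cP)$ becomes a \emph{directed} multigraph on $\{v_1,\dots,v_k\}$: a maximal directed path of $H_{|\cP}$ with tail labeled $i$ and head labeled $j$ yields a red arc $v_i\to v_j$ (an isolated vertex labeled $i$ yields a red loop at $v_i$), and complement $H$-paths yield blue arcs the same way. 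A directed Hamiltonian cycle decomposes, exactly as before, into an alternating cyclic sequence $P_1,Q_1,\dots,P_\ell,Q_\ell$ in which the head of each path coincides with the tail of the next; so the object to detect in $\aux_H(\cP)\uplus\aux_H(\cQ)$ is a closed \emph{red-blue alternating directed Eulerian trail}, i.e.\ a closed directed trail using every arc once in which consecutive arcs have opposite colors.

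The heart of the adaptation, and the step I expect to be the main obstacle, is the directed analogue of Lemma~\ref{lem:euleriantrail}. Writing $\rdeg^+,\rdeg^-,\bdeg^+,\bdeg^-$ for out/in red and blue degrees (loops counted in both), I would prove that a weakly connected red-blue directed multigraph admits a closed red-blue alternating directed Eulerian trail if and only if every vertex $v$ satisfies $\rdeg^-(v)=\bdeg^+(v)$ and $\bdeg^-(v)=\rdeg^+(v)$. Necessity is immediate from the alternation, as in the undirected case. For sufficiency, the clean route is a splitting gadget: build an ordinary directed multigraph $G^\ast$ by replacing each vertex $v$ with two copies $v_1$ (``just traversed red, must leave blue'') and $v_2$ (``just traversed blue, must leave red''), rerouting each red arc $u\to w$ to $u_2\to w_1$ and each blue arc $u\to w$ to $u_1\to w_2$. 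The two balance conditions are exactly the statement that in-degree equals out-degree at every vertex of $G^\ast$, and directed Eulerian circuits of $G^\ast$ correspond bijectively to closed red-blue alternating directed trails of $G$ using all arcs. The remaining work is to show that the balance conditions together with weak connectivity of $G$ force $G^\ast$ to be connected (equivalently, that distinct alternating closed trails sharing an original vertex can be spliced); this connectivity argument is the genuine technical content and plays the role of the cut-edge analysis in Lemma~\ref{lem:euleriantrail}.

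With this lemma in hand, the two correspondence lemmas (Lemmas~\ref{lem:easyway} and~\ref{lem:EulerianHamiltonian}) carry over once orientations are tracked. Lemma~\ref{lem:cwproperties}(1) must be restated with in- and out-neighborhoods outside $V(H)$ separately (both hold because the directed $\eta$ operation adds arcs in only one direction and same-labeled vertices are created identically), so that an $H$-path of $\cQ$ can be rerouted to any pair of endpoints with the same label pair \emph{and the same orientation}. I would then define $\cP_1\simeq\cP_2$ to require the same weakly connected components and, at each $v_i$, the same pair $(\rdeg^+(v_i),\rdeg^-(v_i))$; since these are precisely the data on which the directed Eulerian lemma depends, Lemma~\ref{lem:reduce1} ($\reduce_H(\cA)\lesssim_H\cA$) goes through unchanged, as do Lemma~\ref{lem:reduce} and Definition~\ref{def:bigequiv} reread with directed blue multigraphs. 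The count of Lemma~\ref{lem:nbequivalence} is unaffected up to constants: each vertex now carries a pair of degrees, each at most $2n$, giving at most $n^{O(k)}$ degree signatures, and the number of component-partitions is still $2^{O(k\log_2 k)}$, so $|\reduce_H(\cA)|\leq n^{O(k)}$.

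Finally I would adapt the dynamic-programming operations of Section~\ref{sec:algo}. Relabeling and disjoint union (Lemmas~\ref{lem:relabeling} and~\ref{lem:disjointunion}) are insensitive to orientation and transfer directly. The edge-addition step is the only genuinely directional one: $\cP+(i,j)$ now adds an arc from the head-end of one directed path (a vertex labeled $i$ of out-degree $0$) to the tail-end of another (a vertex labeled $j$ of in-degree $0$), and the companion Claim~\ref{claim:addeges} is proved as before except that the auxiliary blue arc added to $\cM$ is oriented $v_j\to v_i$, so that alternation \emph{and} orientation match when the new red arc $v_i\to v_j$ is inserted; thus $\cA_D^0\cup\dots\cup\cA_D^n\lesssim_H\Pi(H)$ still holds. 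Assembling these exactly as in Theorem~\ref{thm:hamiltonian}, the bottom-up computation produces at the top node $H=\eta_{i,j}(D)$ with $V(H)=V(G)$ a bounded set $\cA_D$, and $G$ has a directed Hamiltonian cycle iff some $\cP\in\cA_D$ satisfies $\rdeg^+(v_\ell)=\rdeg^-(v_\ell)=0$ for $\ell\notin\{i,j\}$ together with the matching in/out degree condition (of the directed Eulerian lemma) at $v_i$ and $v_j$ with positive value. The running-time estimate is identical to that of Theorem~\ref{thm:hamiltonian} up to constant factors, giving the claimed $n^{O(k)}$ bound.
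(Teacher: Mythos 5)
Your overall architecture is the same as the paper's: identical directed auxiliary multigraphs, identical balance conditions $\rdeg^-_G(v)=\bdeg^+_G(v)$ and $\bdeg^-_G(v)=\rdeg^+_G(v)$, and the same plan of transporting Sections~\ref{sec:path-col}--\ref{sec:algo} once a directed analogue of Lemma~\ref{lem:euleriantrail} is available (the paper proves that analogue by the greedy trail-extension argument with strongly connected components in place of cut-edges, rather than by your splitting gadget). The problem is that the step you explicitly defer --- that balance plus connectivity of $G$ forces $G^\ast$ to be connected --- is false, so the characterization you (and Lemma~\ref{lem:directedeuleriantrail}) rely on does not hold. Take $V(G)=\{u,v\}$ with red arcs $u\to v$ and $v\to u$ and blue arcs $u\to v$ and $v\to u$. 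Both balance conditions hold at both vertices and $G$ is strongly connected, yet there is no closed red-blue alternating directed Eulerian trail: a trail beginning with the red arc $u\to v$ must continue with the unique blue arc leaving $v$, which returns to $u$, where the only red out-arc has already been used (and symmetrically for the other starting arc). Your gadget makes the obstruction visible: $G^\ast$ decomposes into the two disjoint $2$-cycles on $\{u_2,v_1\}$ and on $\{u_1,v_2\}$. So the correct criterion is balance together with connectivity of $G^\ast$, which is strictly stronger than connectivity of $G$; the splicing argument you propose cannot succeed, because two alternating closed trails meeting at $v$ only in opposite ``states'' ($v_1$ versus $v_2$) genuinely cannot be merged.

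This is not merely a defect of the lemma; it breaks the representativity argument. With $\cM$ consisting of two blue arcs $v_1\to v_2$ and two blue arcs $v_2\to v_1$, the red auxiliary digraph with two arcs $v_1\to v_2$ and two arcs $v_2\to v_1$ admits no alternating Eulerian trail together with $\cM$, whereas the red auxiliary digraph with one arc $v_1\to v_2$, one arc $v_2\to v_1$, and one loop at each of $v_1$ and $v_2$ does; both are realizable as $\aux_H(\cP)$, and both have the same weak components and the same pair $(\rdeg^+,\rdeg^-)$ at every vertex. Hence your $\simeq$ identifies two partial solutions that are not interchangeable, the directed analogue of Lemma~\ref{lem:reduce1} fails, and $\reduce_H$ may discard the only extendable partial solution. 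The natural repair is already implicit in your construction: store, as the connectivity invariant of a partial solution, the partition of the $2k$ split states induced by the components of $G^\ast$ (together with the degree pairs); this is still only $2^{O(k\log_2 k)}$ values, so the $n^{O(k)}$ bound survives, but the proof as you have written it does not go through. (The same counterexample applies verbatim to the statement of Lemma~\ref{lem:directedeuleriantrail} itself, whose Case~1 tacitly assumes an out-arc of the required color exists at $v_{i+1}$.)
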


	For \textsc{Directed Hamiltonian Cycle}, auxiliary graphs should be directed graphs.
	We define partial solutions and auxiliary multigraphs similarly, at the exception that a directed maximal path (resp. $H$-path) from a vertex labeled $i$ to a vertex labeled $j$ in a partial solution (resp. complement solution) corresponds to an arc $(v_i,v_j)$ in its auxiliary multigraph.
	
	Let $G$ be an $n$-vertex directed graph and $\phi$ be a directed irredundant $k$-expression of $G$.
	Similar to the proof of Theorem~\ref{thm:hamiltonian}, 
	for every $k$-labeled graph $H$ arising in $\phi$,
	we recursively compute a set $\cA_H$ of small size such that $\cA_H$ represents $\Pi(H)$ which is the set of all partial solutions of $H$.
	
	It is not hard to see that Lemmas \ref{lem:easyway} and \ref{lem:EulerianHamiltonian} hold also in the directed case. That is, we have an equivalence between directed Hamiltonian cycle in graphs and directed red-blue Eulerian trail in multigraphs.
	Thus, to adapt our ideas for undirected graphs, we only need to characterize the directed multigraphs which admit a red-blue Eulerian trail.
	Fleischner~\cite[Theorem VI.17]{Fleischner90} gave such a characterization for directed graphs without loops and multiple arcs, 
	but the proof can easily be adapted for directed multigraphs.
	
	Let $M$ be a directed multigraph whose arcs are colored red or blue, and let $R$ and $B$ be respectively its set of red and blue arcs.
	We denote by $M^*$ the digraph derived from $M$ with $V(M^*):=\{v_{1},v_{2}\mid v\in V(M)\}$ and $E(M^*):=\{(v_1,w_2) \mid (v,w)\in B \} \cup \{ (v_2,w_1) \mid (v,w)\in R\}$. For a digraph $G$ and a vertex $v$ of $G$, we denote by $\deg^+_G(v)$ and $\deg^-_G(v)$, respectively, the outdegree and indegree of $v$ in $G$.

	\begin{lemma}[Fleischner~\cite{Fleischner90}]\label{lem:directedeuleriantrail}
		Let $M$ be a directed multigraph whose arcs are colored red or blue. 
		The following are equivalent.
		\begin{enumerate}
		\item $M$ has a red-blue Eulerian trail.
		\item $M^*$ has an Eulerian trail.
		\item The underlying undirected graph of $M^*$ has at most one connected component containing an edge, and, for each vertex $v$ of $M^*$, $\deg^+_{M^*}(v)=\deg^-_{M^*}(v)$.
		\end{enumerate} 
	\end{lemma}
	
	In (3), the condition that ``for each vertex $v$ of $M^*$, $\deg^+_{M^*}(v)=\deg^-_{M^*}(v)$'' can be translated to that, for each vertex $v$ of $M$, the number of blue incoming arcs is the same as the number of red outgoing arcs, and the number of red incoming arcs is the same as the number of blue outgoing arcs.
	However, an important point is that instead of having that the underlying undirected graph of $M^*$ has at most one connected component containing an edge, the condition that ``the underlying graph of $M$ is connected'' or ``$M$ is strongly connected'' is not sufficient, because the connectedness of $M^*$ depends on the colors of arcs.
	We give an example. Let $G$ be a graph on $\{x,y,z\}$ with red arcs $(x, y), (y, z)$ and blue arcs $(z, y), (y, x)$.
	Even though $G$ is strongly connected, it does not have a red-blue Eulerian trail, and one can check that $M^*$ has two connected components containing an edge.
	
	To decide whether the underlying undirected graph of $M^*$ has at most one connected component containing an edge, 
	multiple arcs are useless. So, it is enough to keep one partial solution $\cP$ for each degree sequence in $\aux_{H}(\cP)$ and for each set $\{\mult_{\aux_H(\cP)}(e) \mid e\in E(\aux_{H}(\cP))\}$.

	Let $\simeq$ be the equivalence relation on $\Pi(H)$ such that $\cP_1\simeq\cP_2$ if the following are satisfied: 
	\begin{itemize}
		\item for every pair $(v,w)$ of vertices in $\{v_1,\dots,v_k\}$, 
		there exists an arc from $v$ to $w$ in $\aux_H(\cP_1)$ if and only if there exists one in $\aux_H(\cP_2)$,
		\item for every vertex $v$ in $\{v_1,\dots,v_k\}$,
		$\deg^+_{\aux_H(\cP_1)}(v)=\deg^+_{\aux_H(\cP_2)}(v)$ and $\deg^-_{\aux_H(\cP_1)}(v)=\deg^-_{\aux_H(\cP_2)}(v)$.
	\end{itemize}
	
	From Lemma~\ref{lem:directedeuleriantrail} and the definition of $M^*$, we deduce the following fact.
	
	\begin{fact}
		Let $\cP_1,\cP_2\in\Pi(H)$. 
		If $\cP_1\simeq \cP_2$, then $\cP_1$ is part of a directed Hamiltonian cycle in $G$ if and only if $\cP_2$ is part of a directed Hamiltonian cycle in $G$.
	\end{fact} 

	From the definition of $\simeq$, one can show that $|\Pi(H)/\simeq|\leq n^{2k}\cdot 2^{k^2}$. 
	Thus we can follow the lines of the proof for undirected graphs, and easily deduce that one can solve \textsc{Directed Hamiltonian Cycle} in time $n^{O(k)}$, where $k$ is the directed clique-width of the given digraph.

	\section{Conclusion}\label{sec:conclusion}
	
	We have proved that, given a $k$-expression, one can solve \textsc{Hamiltonian Cycle} in time $n^{O(k)}$, 
	and also prove a similar variant for directed graphs.

	One major open question related to clique-width is to determine whether it can be approximated within a constant factor.	
	One can bypass this long-standing open problem by using a related parameter called \emph{rank-width}.
	This parameter was introduced by Oum and Seymour~\cite{OumS06} and admits an efficient algorithm to approximate it within a constant factor.
	Moreover, the clique-width of a graph is always bigger than its rank-width.
	On the other hand, rank-width is harder to manipulate than clique-width.
	To the best of our knowledge, there is no optimal algorithm known for basic problems such as \textsc{Vertex Cover} and \textsc{Dominating Set}, where the best algorithms run in time $2^{O(k^2)}\cdot n^{O(1)}$~\cite{BuiXuanTV13} and the best lower bounds state that we can not solve these problems in time $2^{o(k)}\cdot n^{O(1)}$ unless \ETH fails.
	Improving these algorithms or these lower bounds would be the natural way of continuing the works done on clique-width.
	
	\medskip
	
	Recently, Bergougnoux and Kant\'e \cite{BergougnouxK19} proved that the \textsc{Max Cut} problem is solvable in time $n^{O(k)}$ where $k$ is the clique-width of the input graph without assuming that the graph is given with a  $k$-expression.
	For doing so, they used a related parameter called $\mathbb{Q}$-rank-width and the notion of $d$-neighbor equivalence.
	It would be interesting to know whether the same approach can be used for the \textsc{Hamiltonian Cycle} problem.
	
	\medskip
	
	We conclude with one explicit question. A digraph $D$ is an \emph{out-tree} if $D$ is an oriented tree (an undirected tree with orientations on edges) with only one vertex of indegree zero (called the root). 
	The vertices of out-degree zero are called leaves of $D$. The \emph{Min Leaf Out-Branching} problem asks for a given digraph $D$ and an integer $\ell$, 
	whether there is a spanning out-tree of $D$ with at most $\ell$ leaves. This problem generalizes \textsc{Hamiltonian Path} (and also \textsc{Hamiltonian Cycle}) by taking $\ell=1$. 
	Ganian, Hlin{\v{e}}n{\'y}, and Obdr{\v{z}}{\'a}lek~\cite{GanianHO11} showed that there is an $n^{2^{O(k)}}$-time algorithm for solving \textsc{Min Leaf Out-Branching} problem, 
	when a clique-width $k$-expression of a digraph $D$ is given. We ask whether it is possible to drop down the exponential blow-up from $2^{O(k)}$ to $O(k)$.
	
	\medskip
	
	\paragraph{\textbf{Acknowledgement.}}
The authors would like to thank the anonymous reviewer for pointing out the previous mistake on red-blue Eulerian trails for directed graphs and for indicating proper citations.

\end{document}